\theoremstyle{plain}
\newtheorem{theorem}{Theorem}[section]
\newtheorem{lemma}[theorem]{Lemma}
\newtheorem{proposition}[theorem]{Proposition}
\theoremstyle{definition}
\theoremstyle{remark}
\newtheorem{remark}{Remark}
\begin{document}

\articletype{Research articles}

\title{Parameter estimation {for} fractional autoregressive process with seasonal structure }

\author{
\name{Chunhao Cai\textsuperscript{a}\thanks{CONTACT Yiwu Shang. Email: shangyiwu@mail.nankai.edu.cn} and Yiwu Shang\textsuperscript{b}}
\affil{\textsuperscript{a}School of mathematics(Zhuhai),Sun Yat-sen University,Zhuhai,People’s Republic of China; \textsuperscript{b}School of mathematical sciences,Nankai University,Tianjin,People’s Republic of China}
}

\maketitle
 
\begin{abstract}
This paper introduces a new kind of seasonal fractional autoregressive process (SFAR) driven by fractional Gaussian noise (fGn). The new model includes a standard seasonal AR model and fGn. {The estimation of the parameters of this new model has to solve two problems: nonstationarity from the seasonal structure and long memory from fGn. We innovatively solve these by getting a stationary subsequence, making a stationary additive sequence, and then obtaining their spectral density. Then, we use one-step procedure for Generalized Least Squares Estimator (GLSE) and the Geweke Porter-Hudak (GPH) method to get better results.  We prove that both the initial and one-step estimators are consistent and asymptotically normal. Finally, we use Monte Carlo simulations with finite-sized samples to demonstrate the performance of these estimators. Moreover, through empirical analysis, it is shown that the SFAR model can simulate some real world phenomena better than general models.}   
\end{abstract}

\begin{keywords}
Seasonal autoregressive process; fractional Gaussian noise;  one-step procedure
\end{keywords}
\section{INTRODUCTION}  

The long memory phenomenon and seasonal phenomenon play important roles in economics, geography, and other fields. One classic type of seasonal model with long memory is the  $ARFISMA{(p,d,q)\times(P,D,Q)_{s}}$
  process, which has been extensively researched by  \cite{hosking1984modeling}, \cite{franco2007bootstrap},  \cite{chan1995inference}.  As demonstrated below, these models are equivalent to the ARUMA model, which can be expressed as
\begin{equation}
{G(\mathcal{B})}X_{n} = \epsilon_{n}, \label{1.4}
\end{equation}
	 where $\mathcal{B}$ is the lag operator, $\epsilon_{n}$ is a short memory process, { and $G(\cdot)$ satisfy the recurrence relation} 
\begin{equation}
	{G(z)} =  (1-z)^{d_{0}}\left\{ \prod_{j=1}^{r-1}(1-2zcos\lambda_{j}+z^{2})^{d_{j}} \right\}(1+z)^{d_{r}} , \label{1.5}
\end{equation}
	{
where \(|d_{j}|\leq\frac{1}{2}\), and for \(j = 0,1,2,\dots,r\), denoting that \(r\) is a positive integer, the frequencies satisfy \(0\leq\lambda_{j}\leq\pi\). Additionally, \(\lambda_{0} = 0\) and \(\lambda_{r}=\pi\).}
\par {In the present paper, we conduct a comprehensive study on a \(p\)-order seasonal fractional autoregressive process (SFAR), denoted as \(X_{nT + u}\).
Here, the nonnegative integer \(T\) signifies the number of seasons, suggesting that the time series data exhibits seasonal fluctuations with a period of \(T\). For example, when \(T = 4\), it represents a quarterly seasonal pattern, and when \(T = 12\), it corresponds to a monthly seasonal pattern.
For any \(n \in\mathbb{N}\) (where $n$ denotes the count of complete seasonal cycles), the model adheres to the following recursive relation:}
 \begin{equation}
	X_{nT+u} =\sum_{i=1}^{p}\phi_{i}(nT+u)X_{(n-i)T+u}+\epsilon^{H}_{nT+u}, \quad u= 1,2,...T,  \label{1.1}  
 \end{equation}
{where $u$ denotes the specific time points within each seasonal cycle, thus taking values from $1$ to $T$, $p$ represents the order of the autoregressive part of the model. The $\phi_{i}(nT + u)$ are autoregressive seasonal coefficients, which may change with time and satisfy $\phi_{i}(u)=\phi_{i}(u + T)$. $\epsilon_{nT + u}^{H}$ represents fractional Gaussian noise, which explains the nonseasonal fluctuations.}
	 Fractional Gaussian noise exhabits long memory when $\frac{1}{2}<H<1 $. The long memory phenomenon indicates strong autocorrelation or dependence in time series data. We typically say that $X_{t}$ has long memory if its covariance satisfies 
\begin{equation}
 \gamma_{j} \sim {Q}j^{2H-2},\quad j \rightarrow \infty, \label{1.2}
 \end{equation}
	the spectral density is defined by the scheme 
\begin{equation}
 f(\lambda) \sim {V}\lambda^{1-2H},\quad \lambda \rightarrow 0^{+}, \label{1.3}
\end{equation}
	where $\frac{1}{2} < H < 1$, {Q} and {V} are constants greater than 0. \cite{robinson2010long}, \cite{bisognin2009properties}, \cite{beran2013long}  did a great deal of detailed and excellent work in fractional Gaussian noise (fGn), especially in the estimation of \(H\). 
   \par  {The Seasonal Fractional Autoregressive (SFAR) model represents a natural expansion of the fractional autoregressive process (FAR). The FAR, recognized as a long memory model is formulated as 
\begin{equation}
X_{n}=\sum_{i = 1}^{p}a_{i}X_{n - 1 - i}+\epsilon_{n}^{H}, \quad  n\in\mathbb{N}.
\end{equation}
where $a_{i} \in \mathbb{R}$. It is composed of fractional Gaussian noise, and its long range dependence characteristics are determined by the value of $H$.} 
\par {In this paper, we focus on the estimation and asymptotic properties of parameters in the SFAR model. \cite {geweke1983estimation} and \cite{carlin1989sensitivity} conducted some research on such models in the early stage. We aim to extend related research and will use a one-step procedure to optimize our approach.} 
 \par {For the parameter estimation of SFAR model, two key problems need to be addressed: the nonstationarity resulting from the seasonal structure and the dependence within the fractional Gaussian noise.}

\par Seasonality is a distinctive feature of time series data where patterns repeat at regular intervals, typically defined by a specific period T. Seasonal time series models are often nonstationary, which presents certain challenges for our research. A common solution is to perform seasonal differencing on the time series. Seasonal autoregressive process is a classical model was proposed by \cite{harrison1965short} and \cite{chatfield1973box}. \cite{tsay2013multivariate} has explored SAR model with white noise  in detail, however, there still remain many interesting variations worthy of research. For instance, the study in \cite{kong2023seasonal} employed particle filtering likelihood methods to estimate seasonal count time series. The other category is the research on the SAR model driven by fGn.
\par { Previous studies by \citet{brouste2014asymptotic} and \citet{soltane2024asymptotic} have laid a foundation for the estimation of the parameters $\phi_{i}(u)$ in FAR models.
In this paper, we use the modified Generalized Least Squares Estimation (GLSE) proposed by \citet{esstafa2019long} and \citet{hariz2024fast} to obtain a consistent estimator of $\phi_{i}(u)$. Additionally, we will prove that this estimator is asymptotically normal.}
 \par {Time series models with long memory show long range dependencies between distant observations, posing challenges to traditional statistical analysis and forecasting. In the SFAR model, long memory comes from fractional Gaussian noise, where the parameter $H$ determines this characteristic. Thus, estimating $H$ is crucial. The first method for estimating $H$ was the rescaled range analysis by \citet{hurst1951long}, but its lack of a limiting distribution complicates statistical inference. Now, popular estimation techniques are the GPH estimation by \citet{geweke1983estimation} and the local Whittle estimation by \citet{robinson1995log}.  }
 \par {  For the estimation of the Hurst index $H$, we will adopt the Geweke Porter-Hudak (GPH) method, which exhibits a smaller bias, for an additive stationary time series derived from the samples. It is worth noting that it would be more straightforward to estimate \(\hat{H}_{n}(u)\) by \((X_{nT + u})_{n\in\mathbb{N}}\). However, this approach is not fundamentally different from the method in \citet{hariz2024fast} and each  \(\hat{H}_{n}(u)\) cannot contain information about all the data. Meanwhile, considering that sequence \((X_{nT + u})_{n\in\mathbb{N},u=1,\dots,T}\) represents data of the same nature, we assume that the long memory parameter is the same for each season and is independent of the season $u$, and the differences between different seasons are only determined by the seasonal parameters. To obtain a unique \(\hat{H}_{n}\), we sum up the data in each cycle to obtain a new sequence $(Y_{n})_{n \in \mathbb{N}}$, then we prove the stationarity of $(Y_{n})_{n \in \mathbb{N}}$, calculate its spectral density, and finally use the GPH method to get  \(\hat{H}_{n}(u)\). This improvement enables us to address the issue of parameter estimation for \(H\) in nonstationary time series with seasonality. }
\par After obtaining the initial estimators of $\phi_{i}(u)$ and  $H$, we modify our approach using a faster and asymptotically efficient method known as the one-step estimator. This method, first proposed by \cite{le1956asymptotic}, has been widely applied in ergodic Markov chains \citep{kutoyants2016multi}, diffusion processes \citep{gloter2021adaptive}, and fractional autoregressive processes \citep{hariz2024fast}. 
The primary challenge lies in calculating the Fisher information matrix, as discussed in \cite{cohen:hal-00638121}. To tackle this issue, We extract the data from each season to form a new series, proving the stationarity of this new series and deriving its spectral density. Subsequently, we can utilize the results from \cite{cohen:hal-00638121} and \cite{hariz2024fast} to obtain related findings.	
	\par 
     This paper is organized as follows. Sections 2 and 3 present the main results. Section 2 introduces the initial estimator of the Hurst index, $\phi_{i}(u)$ and discusses its asymptotic properties. Section 3 derives the one-step estimator and its asymptotic properties. Section 4 provides numerical illustrations to demonstrate the performance of both the initial and one-step estimators. Section 5 concludes the paper and considers the prospects and significance of our research. { Section 6 illustrates that the SFAR model is superior to the traditional seasonal autoregressive model through a practical application. All technical proofs are gathered in Section 7, while Section 8 presents auxiliary results.}

\section{Initital estimator of SFAR(1) models}
\label{sec2}
\subsection{Problem statements and assumptions}
  \par {Without loss of generality, based on the representation of the SFAR model in (\ref{1.1}), we can consider the first order model in this paper and denote $\phi_{1}(u)=\phi(u)$.
  }
  \par {$X_{nT+u}$ is said to be a SFAR(1) model if it admits the representation}
  \begin{equation}
  	X_{nT+u}=\phi(nT+u)X_{(n-1)T+u}+\epsilon^{H}_{nT+u}, \quad u=1,2,3...T, \quad n \in \mathbb{N},   \label{2.1}
  \end{equation}
   where $\phi(u) = \phi(u+nT)$, $T$ represents the season length and $u$ denotes the $u$-th season of the $n$-th cycle. {The term $\epsilon_{nT + u}^{H}$ represents a stationary fractional Gaussian noise with a Hurst index $H$. It is defined as the increment of the fractional Brownian motion, specifically $\epsilon_{nT + u}^{H}=B_{nT + u + 1}^{H}-B_{nT + u}^{H}$, where $B_{nT + u}^{H}$ is the fractional Brownian motion.} The autocovariance of sequence  $(\epsilon^{H}_{n})_{n\in \mathbb{N}}$ takes the form of 
  \begin{equation}
  	\rho(k) = \frac{1}{2}(|k+1|^{2H}-2|k|^{2H}+|k-1|^{2H}), \label{2.2}
  \end{equation}
  the spectral density of  $(\epsilon^{H}_{n})_{n\in \mathbb{N}}$ defined by
  \begin{equation}
  	{f_{\epsilon_{n}^{H}}(\lambda)} = C_{H}(1-cos(\lambda))\sum\limits_{j \in Z}\frac{1}{|\lambda+2j\pi|^{2H+1}}, \label{2.3}
  \end{equation}
where $C_H=\frac{1}{2\pi}\Gamma(2H+1)sin(\pi H)$ and $\lambda \in [-\pi,\pi]$, $\Gamma(\cdot)$ is Gamma function. 
  \par Here are some assumptions and notations bellow.
  \par {${A_{0}}$: Denote \(\Theta_{u}^{l^\star}\) as a compact set with the following expression,
\begin{center}
\(\Theta_{u}^{l^\star}=\left\{\phi(u)\in\mathbb{R}; \text{ the roots of }1 - \phi(u)z = 0\text{ have modulus }\geq1 + l\right\}\).
\end{center}}
  \par {We define the set  $\Theta^{l}_{u}$ as the Cartesian product  $\Theta^{l^\star}_{u} \times [d_{1},d_{2}]$, where $l$ is a positive constant and $[d_{1},d_{2}] \in (0,1)$.}
  \par {${A_{1}}$: \(\phi(u) \in (-1, 1)\) and \(H \in (0, 1)\). }
 \par {\textbf{Notation:} By $\xrightarrow{\mathcal{L}}$ and $\xrightarrow{\mathbb{P}}$, respectively, we denote convergence in law and convergence in probability. Let $\underline{\phi}=(\phi(1),\phi(2),\ldots,\phi(T))$. Denote the parameters $\theta(u)=(\phi(u),H)$, where $\theta(u)\in\mathring{\Theta}^{l}_{u}$, and $\mathring{\Theta}^{l}_{u}$ represents the interior of $\Theta^{l}_{u}$.}
 \par {Define the parameter space $\Theta^{l}=\Theta^{l^{\star}}_{1}\times\Theta^{l^{\star}}_{2}\times\cdots\times\Theta^{l^{\star}}_{T}\times[d_{1},d_{2}]$, which encompasses all the required parameters. Given samples of size $n$, we obtain the estimators $\hat{\theta}_{n}=(\hat{\phi}_{n}(1),\hat{\phi}_{n}(2),\ldots,\hat{\phi}_{n}(T),\hat{H}_{n})$ and $\hat{\theta}_{n}(u)=(\hat{\phi}_{n}(u),\hat{H}_{n})$. 
}
   
  \par  In this paper, we will present both the initial estimator and the one-step estimator for the parameters of the SFAR(1) model. The following sections will delve into the asymptotic properties and characteristics of these estimators in detail.  
  \subsection{The GPH estimator for the hurst index}
  Due to the nonstationarity of $X_{n}$, obtaining an estimator for $H$ using standard semiparametric methods is not feasible. To address this, we can extract stationarity from the data by splitting the time series $(X_{n})_{n \in \mathbb{N}}$ into seasonal components, resulting in $T$ stationary subsequences $\underline{ {X}}(u)=(X_{u},X_{T+u},...,X_{nT+u}) $ and we construct a stationary additive series defined as $Y_{n} = \sum^{T}_{u=1} X_{nT+u}$. 
  \par In this subsection, we will estimate $H$  using the log-periodogram method, specifically the GPH estimator, applied to the additive series$(Y_{n})_{n \in \mathbb{N}}$. The spectral density and stationarity properties of $(Y_{n})_{n \in \mathbb{N}}$ and $(X_{nT+u})_{u\in \mathbb{Z}}$  are outlined in the following three propositions.
  \begin{proposition}
  	For each $u = 1,2,...T$ and any $n\in \mathbb{N}$, Under conditions $(A_{0})$ and $(A_{1})$, the process
  	\begin{equation}
  		X_{nT+u} = \sum^{\infty}_{j=0} \phi^{j}(u)\epsilon^{H}_{(n-j)T+u},\quad a.s, \label{XX}
  	\end{equation}
    \begin{equation}
    	Y_{n} = \sum^{T}_{u = 1} X_{nT+u},\quad a.s,\label{2.5}
    \end{equation}
    are stationary process.
  \end{proposition}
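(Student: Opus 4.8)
The plan is to prove the statement in three stages: (i) show that the series in (\ref{XX}) is well defined, i.e.\ converges both almost surely and in $L^2$; (ii) check that the process it defines is (the unique finite-variance) solution of the seasonal recursion (\ref{2.1}); and (iii) deduce stationarity of $(X_{nT+u})_{n}$, and then of $(Y_n)_n$, from the fact that each is the image of the stationary Gaussian noise $(\epsilon^H_k)_{k\in\mathbb{Z}}$ under a fixed, absolutely summable linear filter. Throughout, the fGn is taken two-sided, $\epsilon^H_k=B^H_{k+1}-B^H_k$ for a two-sided fBm $B^H$, so the negative time indices appearing in (\ref{XX}) cause no difficulty; moreover all processes involved are Gaussian, so stationarity reduces to checking that covariances depend only on the lag.

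For (i) and (ii), fix $u\in\{1,\dots,T\}$. Under $(A_{0})$ and $(A_{1})$ we have $|\phi(u)|<1$ (indeed $|\phi(u)|\le(1+l)^{-1}$), hence $\sum_{j\ge0}|\phi(u)|^{j}<\infty$. Since $\epsilon^{H}$ is stationary with $\mathbb{E}\big[(\epsilon^H_k)^2\big]=\rho(0)=1$, it follows that $\sum_{j\ge0}\mathbb{E}\,\big|\phi^{j}(u)\epsilon^H_{(n-j)T+u}\big|<\infty$, so by Tonelli's theorem the series (\ref{XX}) converges absolutely almost surely; and the estimate $\big\|\sum_{j=m}^{m'}\phi^{j}(u)\epsilon^H_{(n-j)T+u}\big\|_{L^2}^2\le\rho(0)\big(\sum_{j=m}^{m'}|\phi(u)|^{j}\big)^2$ shows the partial sums are Cauchy in $L^2$, so the limit has finite variance. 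Splitting off the $j=0$ term and using the periodicity $\phi(nT+u)=\phi(u)$ yields
\[
X_{nT+u}=\epsilon^H_{nT+u}+\phi(u)\sum_{j\ge0}\phi^{j}(u)\epsilon^H_{(n-1-j)T+u}=\phi(nT+u)X_{(n-1)T+u}+\epsilon^H_{nT+u},
\]
so (\ref{XX}) indeed solves (\ref{2.1}); iterating (\ref{2.1}) $m$ times and letting $|\phi(u)|^{m}\to0$ against the uniform bound on $\|X\|_{L^2}$ shows it is the only solution with uniformly bounded second moments.

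For (iii), fix $u$ first. The subsampled noise $(\epsilon^H_{mT+u})_{m\in\mathbb{Z}}$ is a stationary Gaussian sequence, with autocovariance $m\mapsto\rho(mT)$, and $(X_{nT+u})_{n}$ is its image under the fixed filter $(\phi^{j}(u))_{j\ge0}$; hence it is a stationary Gaussian process, with $\mathrm{Cov}(X_{nT+u},X_{(n+h)T+u})=\sum_{j,k\ge0}\phi^{j+k}(u)\,\rho\!\big((k-j-h)T\big)$ depending on $h$ only. For $Y_n=\sum_{u=1}^{T}X_{nT+u}$, write each integer uniquely as $k=mT+v$ with $v\in\{1,\dots,T\}$; then $Y_n=\sum_{k\in\mathbb{Z}}b_{n,k}\,\epsilon^H_k$ with $b_{n,k}=\phi^{\,n-m}(v)\,\mathbf{1}\{m\le n\}$, and a direct computation gives the identity $b_{n+1,k+T}=b_{n,k}$. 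Thus advancing $n$ by one cycle is exactly a shift of the noise by $T$ lags; since $(\epsilon^H_k)_{k\in\mathbb{Z}}$ is stationary, hence invariant under a lag-$T$ shift, the Gaussian family $(Y_n,Y_{n+1},\dots)$ has an $n$-independent law, i.e.\ $(Y_n)_n$ is stationary. Equivalently, $\mathrm{Cov}(Y_n,Y_{n+h})=\sum_{u,v=1}^{T}\sum_{j,k\ge0}\phi^{j}(u)\phi^{k}(v)\,\rho\!\big((k-j-h)T+u-v\big)$ is a function of $h$ alone, the double sum being absolutely convergent because $\sum_{j}|\phi(u)|^{j}<\infty$ for each $u$ and $|\rho|\le\rho(0)$.

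The argument is essentially bookkeeping, and no single step is hard; the one point that deserves a little care is the index identity $b_{n+1,k+T}=b_{n,k}$ in stage (iii), which is precisely what turns a one-cycle time shift of $Y$ into a lag-$T$ shift of the underlying fGn and thereby makes the additive series $(Y_n)_n$ --- as opposed to its individual seasonal components --- genuinely stationary.
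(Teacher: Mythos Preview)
Your proof is correct and covers the same ground as the paper's, but the packaging differs. The paper proceeds by directly verifying the three defining conditions of weak stationarity for $(X_{nT+u})_n$: finiteness of the mean (shown to be zero via dominated convergence), finiteness of the second moment, and lag-dependence of the autocovariance, all via explicit term-by-term computation; the stationarity of $(Y_n)_n$ is then asserted in one line as following ``naturally'' from the cyclic structure. You instead argue structurally: once the series converges in $L^2$, $(X_{nT+u})_n$ is the image of the stationary subsampled noise $(\epsilon^H_{mT+u})_m$ under a fixed absolutely summable filter, so stationarity is immediate; and for $(Y_n)_n$ you produce the explicit coefficient identity $b_{n+1,k+T}=b_{n,k}$, which converts a unit shift in $n$ into a lag-$T$ shift of the underlying fGn. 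Your route is tidier and, in particular, your treatment of $(Y_n)_n$ is more careful than the paper's one-sentence dismissal; you also address uniqueness of the solution to (\ref{2.1}), which the paper does not. The paper's direct computation, on the other hand, makes the explicit covariance formulae visible along the way, which it later reuses.
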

   \par According to the above formula and Theorem 4.4.1 in \cite{brockwell1991time}, we deduce the spectral density of $Y_n$ from the spectral density of $\epsilon^H_t$. The proof will be presented in detail in Section 6.
   {\begin{remark}
 The stationary process \((Y_{n})\) encompasses all the information of \(\epsilon^{H}_{n}\). Therefore, we will utilize \((Y_{n})\) to obtain the estimation of \(H\) and the one-step estimator. 
   \end{remark}}
  \begin{proposition}
  	Let $	f_{H,\phi(u)}(\lambda)$ be the spectral density of $(X_{nT+u})_{n \in \mathbb{N}}$, then it can be rewritten as
  	\begin{equation}
  		f_{H,\phi(u)}(\lambda) = (1-2\phi(u)cos\lambda T +\phi^{2}(u))^{-1}	{f_{\epsilon_{n}^{H}}(\lambda)}. \label{XX}
  	\end{equation}
   
  \end{proposition}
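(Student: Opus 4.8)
The plan is to base the proof on the causal moving-average representation $X_{nT+u}=\sum_{j\ge0}\phi^{j}(u)\,\epsilon^{H}_{(n-j)T+u}$ established in the first proposition of this subsection, which already gives that $(X_{nT+u})_{n\in\mathbb N}$ is stationary for each fixed $u$. The explicit form of the spectral density then follows by computing the autocovariance function directly and reading it off; the same identity can alternatively be obtained in one line from the filtering theorem, and I indicate both routes.

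Fix $u$ and set $\gamma_{X}(k)=\mathrm{Cov}\bigl(X_{nT+u},X_{(n+k)T+u}\bigr)$. Inserting the moving-average series and using the spectral representation $\mathrm{Cov}(\epsilon^{H}_{a},\epsilon^{H}_{b})=\rho(a-b)=\int_{-\pi}^{\pi}e^{\mathrm{i}(a-b)\lambda}f_{\epsilon_{n}^{H}}(\lambda)\,d\lambda$ of the fGn autocovariance (see (\ref{2.2})--(\ref{2.3})), together with $(n-i)T-(n+k-j)T=(j-i-k)T$, gives
\[
\gamma_{X}(k)=\sum_{i\ge0}\sum_{j\ge0}\phi^{i}(u)\phi^{j}(u)\int_{-\pi}^{\pi}e^{\mathrm{i}(j-i-k)T\lambda}\,f_{\epsilon_{n}^{H}}(\lambda)\,d\lambda .
\]
Since $|\phi(u)|<1$ and $\int_{-\pi}^{\pi}f_{\epsilon_{n}^{H}}(\lambda)\,d\lambda<\infty$, the double series is dominated, uniformly in $\lambda$, by $\bigl(\sum_{i\ge0}|\phi(u)|^{i}\bigr)^{2}\int_{-\pi}^{\pi}f_{\epsilon_{n}^{H}}(\lambda)\,d\lambda<\infty$, so Fubini applies; summing the two geometric series $\sum_{j\ge0}\phi^{j}(u)e^{\mathrm{i}jT\lambda}=(1-\phi(u)e^{\mathrm{i}T\lambda})^{-1}$ and $\sum_{i\ge0}\phi^{i}(u)e^{-\mathrm{i}iT\lambda}=(1-\phi(u)e^{-\mathrm{i}T\lambda})^{-1}$ and using $|1-\phi(u)e^{\mathrm{i}T\lambda}|^{2}=1-2\phi(u)\cos\lambda T+\phi^{2}(u)$ gives
\[
\gamma_{X}(k)=\int_{-\pi}^{\pi}e^{-\mathrm{i}kT\lambda}\,\frac{f_{\epsilon_{n}^{H}}(\lambda)}{1-2\phi(u)\cos\lambda T+\phi^{2}(u)}\,d\lambda .
\]
Since the non-exponential factor is even and $\gamma_{X}(k)$ is real, $e^{-\mathrm{i}kT\lambda}$ may be replaced by $e^{\mathrm{i}kT\lambda}$, which exhibits $\bigl(1-2\phi(u)\cos\lambda T+\phi^{2}(u)\bigr)^{-1}f_{\epsilon_{n}^{H}}(\lambda)$ as the spectral density $f_{H,\phi(u)}$ of $(X_{nT+u})_{n}$, as claimed.

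The same conclusion follows from Theorem 4.4.1 of \cite{brockwell1991time}: writing the model as $\bigl(1-\phi(u)\mathcal B^{T}\bigr)X_{nT+u}=\epsilon^{H}_{nT+u}$ presents $(X_{nT+u})_{n}$ as the image of the stationary input $(\epsilon^{H}_{nT+u})_{n}$ under the absolutely summable linear filter with transfer function $\psi(z)=(1-\phi(u)z)^{-1}$ applied at lag $T$, so $f_{H,\phi(u)}(\lambda)=|\psi(e^{-\mathrm{i}T\lambda})|^{2}f_{\epsilon_{n}^{H}}(\lambda)$, i.e.\ the same expression. The computation is otherwise routine; the only point that deserves care is what "spectral density" means after the lag-$T$ subsampling. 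The formula above is the one indexed by the \emph{original} frequency, i.e.\ the function for which $\gamma_{X}(k)=\int_{-\pi}^{\pi}e^{\mathrm{i}kT\lambda}f_{H,\phi(u)}(\lambda)\,d\lambda$, whereas the classical $2\pi$-periodic density of $(X_{nT+u})_{n}$ is the folded sum $T^{-1}\sum_{r=0}^{T-1}f_{H,\phi(u)}\bigl((\cdot+2\pi r)/T\bigr)$; I would state the chosen convention explicitly so that no ambiguity remains in Section~3. Under it the identity is exact, and it is the form needed later: as $\lambda\to0$ the autoregressive factor converges to the positive constant $(1-\phi(u))^{-2}$, so $f_{H,\phi(u)}$ inherits the singularity $f_{\epsilon_{n}^{H}}(\lambda)\sim V\lambda^{1-2H}$ of (\ref{1.3}), which is precisely what the estimation of the Hurst index and the Fisher-information computations rely on.
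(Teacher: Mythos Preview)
Your proposal is correct. The second route you sketch---applying Theorem~4.4.1 of \cite{brockwell1991time} to the causal representation with transfer function $(1-\phi(u)e^{-\mathrm{i}T\lambda})^{-1}$---is exactly the paper's argument. Your primary route, computing $\gamma_X(k)$ directly via the spectral representation of $\rho$ and summing the two geometric series after a Fubini step, is a more explicit variant that avoids citing the filtering theorem and makes the domination argument transparent; it yields the same identity with no additional cost. Your caveat about the subsampling convention (spectral density indexed by the original frequency $\lambda$ with $\gamma_X(k)=\int_{-\pi}^{\pi}e^{\mathrm{i}kT\lambda}f_{H,\phi(u)}(\lambda)\,d\lambda$, as opposed to the $2\pi$-periodic folded density in the subsequence's own frequency) is a genuine point the paper leaves implicit; the paper uses the same original-frequency convention consistently in the later covariance-matrix computations, so flagging it is a helpful clarification rather than a correction.
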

  \begin{proposition}
  	Let $g_{H,\underline{{\phi}}}(\lambda)$ be the spectral density of $(Y_{n})_{n \in \mathbb{N}}$, then it can be rewritten as
  	   \begin{equation}
  	   	g_{H,\underline{{\phi}}}(\lambda) = |\sum^{T-1}_{p=0} \Phi_{\phi(T-p)}({\lambda}) |^{2}f_{\epsilon^{H}_{n}}(\lambda), \label{2.7}
  	   \end{equation}
  	   where $ \Phi_{\phi(T-p)}({\lambda}) = \frac{e^{-ip\lambda}}{1-\phi(T-p)e^{-i\lambda T}}$, $p=0,1,...,T-1$.   
   \end{proposition}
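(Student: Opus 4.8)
The plan is to exhibit $(Y_{n})$ as an absolutely summable one-sided linear filter of the fractional Gaussian noise $(\epsilon^{H}_{m})$ and then read off its spectral density from the transfer function of that filter, via Theorem 4.4.1 in \cite{brockwell1991time} (stationarity of $(Y_{n})$ being already guaranteed by the preceding proposition). \emph{Step 1 (filter representation).} Substituting the a.s.\ moving-average expansion $X_{nT+u}=\sum_{j\ge0}\phi^{j}(u)\,\epsilon^{H}_{(n-j)T+u}$ into $Y_{n}=\sum_{u=1}^{T}X_{nT+u}$, I would re-parametrise $u=T-p$ with $p\in\{0,1,\dots,T-1\}$ and collect the terms according to the backward lag $k=jT+p$; since $(j,p)$ is uniquely recovered from $k$ by the division algorithm, this yields
\[
Y_{n}=\sum_{k\ge0}c_{k}\,\epsilon^{H}_{(n+1)T-k},\qquad c_{jT+p}=\phi^{j}(T-p)\quad(j\ge0,\ 0\le p\le T-1).
\]
Under $(A_{0})$ (equivalently $(A_{1})$) one has $|\phi(u)|<1$, so $\sum_{k\ge0}|c_{k}|=\sum_{p=0}^{T-1}\bigl(1-|\phi(T-p)|\bigr)^{-1}<\infty$; this absolute summability both legitimises the rearrangement of the double series and makes the filter admissible for the transfer-function theorem.

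\emph{Step 2 (transfer function) and Step 3 (conclusion).} The coefficient generating function factorises along residue classes modulo $T$:
\[
\sum_{k\ge0}c_{k}z^{k}=\sum_{p=0}^{T-1}z^{p}\sum_{j\ge0}\phi^{j}(T-p)\,z^{jT}=\sum_{p=0}^{T-1}\frac{z^{p}}{1-\phi(T-p)\,z^{T}},\qquad|z|\le1,
\]
and evaluating at $z=e^{-i\lambda}$ gives exactly $\sum_{p=0}^{T-1}\Phi_{\phi(T-p)}(\lambda)$ with $\Phi_{\phi(T-p)}(\lambda)=e^{-ip\lambda}/\bigl(1-\phi(T-p)e^{-i\lambda T}\bigr)$ as in the statement. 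Then, since $(\epsilon^{H}_{m})$ is stationary with spectral density $f_{\epsilon^{H}_{n}}$ given in (\ref{2.3}) and $(c_{k})$ is absolutely summable, Theorem 4.4.1 of \cite{brockwell1991time} shows $(Y_{n})$ is stationary with spectral density $g_{H,\underline{\phi}}(\lambda)=\bigl|\sum_{p=0}^{T-1}\Phi_{\phi(T-p)}(\lambda)\bigr|^{2}f_{\epsilon^{H}_{n}}(\lambda)$, which is the claim. The companion proposition for $(X_{nT+u})_{n}$ is the one-term special case: $\sum_{j\ge0}\phi^{j}(u)e^{-ij\lambda T}=(1-\phi(u)e^{-i\lambda T})^{-1}$ has squared modulus $(1-2\phi(u)\cos\lambda T+\phi^{2}(u))^{-1}$.

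The step I expect to be the real obstacle is Step 3: because advancing $n$ by one shifts the driving-noise index by $T$ rather than by $1$, the filtering theorem has to be applied to the correct ``input'' process. The cleanest way is to regard $Y_{n}$ as a scalar filter of the $T$-blocked vector process $W_{n}=(\epsilon^{H}_{nT+1},\dots,\epsilon^{H}_{nT+T})$ and invoke the multivariate transfer-function theorem; equivalently, one may verify the identity head-on from $\gamma_{Y}(h)=\sum_{k,k'\ge0}c_{k}c_{k'}\,\rho\bigl(hT+k-k'\bigr)$ together with the spectral representation $\rho(m)=\int_{-\pi}^{\pi}e^{im\lambda}f_{\epsilon^{H}_{n}}(\lambda)\,d\lambda$ coming from (\ref{2.3}) and (\ref{2.2}). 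The remaining ingredients — the summability bound in Step 1 and the geometric-series evaluation in Step 2 — are entirely routine.
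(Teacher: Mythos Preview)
Your proposal is essentially the same argument as the paper's: both express $Y_{n}=\sum_{k\ge0}\tilde{h}_{k}\,\epsilon^{H}_{(n+1)T-k}$ with $\tilde{h}_{jT+p}=\phi^{j}(T-p)$, split the transfer series along residue classes modulo $T$ to obtain $\sum_{p=0}^{T-1}\Phi_{\phi(T-p)}(\lambda)$, and then invoke Theorem 4.4.1 of \cite{brockwell1991time}. Your write-up is in fact more careful than the paper's --- you make the absolute summability explicit and you flag the $T$-step time-shift issue in Step~3, which the paper applies the transfer-function theorem through without comment.
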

  \par Because the GPH estimator is a type of semi-parametric estimation as discussed in \cite{geweke1983estimation}, the explicit expression of $|\sum^{T-1}_{p=0} \Phi_{\phi(T-p)}({\lambda}) |^{2}$ does not affect the estimation of $H$. Thus, the equation $\hat{H}_{n}= {\hat{d}_{n}}+\frac{1}{2}$ remains valid. We can then apply the GPH method directly to the stationary process ${Y_{n}}$. 
  \par Let new series $(Y_{n})_{n \in \mathbb{N}}$ be an observation sample generated via the equation (\ref{2.5}) and choose a suitable integer m which can decrease the mean square error of estimation, where $m < n$. we get the periodogram of ${Y_{n}}$ given by 

  \begin{equation}
  	I(\lambda)=\frac{1}{2\pi n}{|\sum_{t=0}^{n}Y_{t}exp(it\lambda)|^{2}},\label{XX}
  \end{equation}
  \begin{equation}
  	\lambda_{j}=\frac{2\pi j}{n},\quad j{\in}\left \{1,2,...m\right\}, \label{XX}
  \end{equation}
  \begin{equation}
  	a_{j}=\log(2sin\frac{\lambda_{j}}{2}),\quad {\overline {a}_m}=\frac {1}{m}\sum_{j=1}^{m}a_{j},\quad S_{m}=\sum_{j=1}^m(a_j-{\overline {a}_m})^2.\label{XX}
  \end{equation}
  
  \noindent We estimate d by regressing $\log I({\lambda_{j}})$ with respect to $a_{j}$, such that
  \begin{equation}
  	{\hat{d}}_{n} = - \frac{1}{2S_{m}} \sum_{j=1}^m(a_j-{\overline {a}_m})  \log I(\lambda_{j}). \label{XX}
  \end{equation}
  
  \noindent The estimator $\hat{H}_{n}$ is defined by
  \begin{equation}
  	\hat{H}_{n}= {\hat{d}_{n}}+\frac{1}{2}.  \label{GPH}
  \end{equation}

\begin{remark}
  There are several semi-parametric methods for estimating the long memory parameter $d$ and $H$, such as whittle estimation and R/S estimation method proposed by \cite{robinson1995log} and \cite{marinucci1998semiparametric}. These models rely on the log-periodogram approach. However, these methods tend to exhibit greater bias compared to the GPH estimator.

\end{remark}
 
  \subsection{Generalized least squares estimation of SFAR(1) models}	
 \par We now focus on estimating $\phi(u)$ given that the parameter $H$  has been estimated. When the noise in the seasonal autoregressive model is white noise, we can easily obtain the estimator of the parameters of these models using Least Squares Estimation (LSE). However, when the noise is fractional Gaussian noise (fGn), the covariance matrix of fGn is no longer diagonal, making LSE inappropriate. Therefore, we consider using Generalized Least Squares Estimation (GLSE).
  \par To address the effect of seasonal structure on parameter estimation, we apply GLSE to the subsequences $(X_{u},X_{T+u},...X_{nT+u}) $, where $u=1,2,...T$. This allows us to estimate the parameters $\phi(1),\phi(2),\ldots,\phi(T)$ sequentially, assuming the Hurst index is known.

  \par We  deduce the time series can be written in the form 
  \begin{equation}
  	\Phi^{j}_{i}(u)=(X_{u+iT},X_{u+(i+1)T},...X_{u+jT})^{*}, \quad\ i\leq j, \label{XX}
  \end{equation}
  and the autocovariance matrix is  given by
  \begin{equation}
  	\Gamma_{nT,u}(H)=\rho(|i-j|T)_{1\leq i,j\leq n}=
  	\begin{pmatrix}
  		\gamma_{0,u}&\gamma_{T,u}&\gamma_{2T,u}&\cdots&\gamma_{(n-1)T,u}\\
  		\gamma_{T,u}&\gamma_{0,u}&\gamma_{3T,u}&\cdots&\gamma_{(n-2)T,u}\\
  		\vdots &\vdots&\vdots&\ddots&\vdots\\
  		\gamma_{(n-1)T,u}&\gamma_{(n-2)T,u}&\gamma_{(n-3)T,u}&\cdots&\gamma_{0,u}\\
  	\end{pmatrix}, \label{XX}
  \end{equation}
  \noindent where $\gamma_{nT,u}= cov(X_{nT+u},X_{u})$, $n \in \mathbb{N}$.
  We can easily show that $\Gamma_{nT,u}(H)$ depends only on $n$ and not on $u$. Hence, we will denote $\Gamma_{nT,u}$ simply as $\Gamma_{nT}$ from now on, without distinguishing between them.

  \par The estimators $\left\{{\hat{{\phi}}}_{n}(u) \right\}_{u \geq 0}$ are defined by
  \begin{equation}
  	{\hat{{\phi}}}_{n}(u)=\frac{{\Phi^{n}_{2}(u)}^{*}\Gamma^{-1}_{(n-1)T}({\hat{H}}_{n}){\Phi^{n-1}_{1}(u)}}{{\Phi^{n-1}_{1}(u)}^{*}\Gamma^{-1}_{(n-1)T}({\hat{H}}_{n}){\Phi^{n-1}_{1}(u)}}.\label{GLSE}
  \end{equation}

  \par Now, due to the seasonal structure, we need to examine whether the elements of $\Gamma^{-1}_{nT}(\hat{H}_n)$ are finite to assess the feasibility of this method.

  \par Thanks to \cite{fox1986large}, \cite{esstafa2019long}. We know that the elements of $\Gamma^{-1}_{nT}({\hat{H}}_{n})$ can be expressed as a function of the spectral density of fGn. The spectral representation of 
 $(\Gamma^{-1}_{nT})_{j,k}$ implies that 
  \begin{equation}
  	(\Gamma^{-1}_{nT})_{j,k}=\frac{1}{(2\pi)^{2}}\int_{-\pi}^{\pi}\frac{1}{f_{\epsilon_{n}^{H}}(\lambda)}e^{i(k-j)T\lambda}d\lambda. \label{XX}
  \end{equation}
 As $\lambda \rightarrow 0$, according to the definition of fractional Gaussian noise, we have
\begin{equation}
   f_{\epsilon_{n}^{H}}(\lambda) \sim \frac{C_{H}}{2} |\lambda|^{1-2H}, \label{XX}
\end{equation}
{where $C_{H}$ is a constant.} We can categorize the elements of the matrix into two types: diagonal elements and off-diagonal elements.        
 
  \par When $j = k$, we have
\begin{equation}
   (\Gamma^{-1}_{nT})_{j,j} = \frac{1}{(2\pi)^{2}} \int_{-\pi}^{\pi} \frac{1}{f_{\epsilon_{n}^{H}}(\lambda)} \, d\lambda = \frac{1}{2 \pi^{2}} \int_{0}^{\pi} \frac{1}{f_{\epsilon_{n}^{H}}(\lambda)} \, d\lambda. \label{2.18}
\end{equation}
One has when $\lambda \rightarrow 0$ that
\begin{equation}
   \frac{1}{f_{\epsilon_{n}^{H}}(\lambda)} = \frac{2}{C_{H}} |\lambda|^{2H-1} + o\left(\frac{2}{C_{H}} |\lambda|^{2H-1}\right). \label{XX}
\end{equation}
This implies that for $l > 0$ there exists $\delta_{l} > 0$ such that for any $\lambda \in (-\delta_{l}, \delta_{l})$, we have
\begin{equation}
   (1 - l) \frac{2}{C_{H}} |\lambda|^{2H-1} \leq \frac{1}{f_{\epsilon_{n}^{H}}(\lambda)} \leq (1 + l) \frac{2}{C_{H}} |\lambda|^{2H-1}. \label{2.20}
\end{equation}
Thus, equation \eqref{2.18} have an upper bound when $\lambda \in (-\delta_{l}, \delta_{l})$:
\begin{eqnarray}
   |(\Gamma^{-1}_{nT})_{j,j}| 
   &\leq& \frac{1 + l}{C_{H} \pi^{2}} \int_{0}^{\delta_{l}} \lambda^{2H-1} \, d\lambda + \frac{1}{2 \pi^{2}} \int_{\delta_{l}}^{\pi} \frac{1}{f_{\epsilon_{n}^{H}}(\lambda)} \, d\lambda \\
   &\leq& \frac{\delta_{l}^{2H} (1 + l)}{2H C_{H} \pi^2} + \frac{\pi - \delta_{l}}{2 \pi^{2}} \sup_{\lambda \in (\delta_{l}, \pi]} \frac{1}{f_{\epsilon_{n}^{H}}(\lambda)} \nonumber \\
   &\leq& K_{1} \nonumber, \label{XXX}
\end{eqnarray}
where {$K_{1}$} is a constant.
\par When $j \neq k$, according to \cite{esstafa2019long}, there exists a positive constant {$K_{2}$} such that for any $j, k = 1, 2, \ldots$
\begin{equation}
   |(\Gamma^{-1}_{nT})_{j,k}| \leq K_{2} \left| \frac{1}{(k-j)T} \right|^{2H}. \label{XX}  
\end{equation}
Therefore, we have shown that the elements of $\Gamma^{-1}_{(n-1)T}(\hat{H}_{n})$ are finite, which implies that $\hat{\phi}_{n}(u)$ is bounded. From these points we use the notations $H$,$\phi(u)$ and estimator $\hat{H}_{n}(u)$,$\hat{\phi}_{n}(u)$ to present our results concerning the asymptotic properties of the initial estimator.
   \begin{theorem} 
  \par Letting  $m = [n^{\delta}]$ for some {$0<\delta < 1$}, $(X_{nT+u})_{n \geq 0}$ satisfy the equation (\ref{2.1}). Under conditions $(A_{0})$ and $(A_{1})$, we have  
  \begin{center}
   $\binom{\hat{H}_{n}}{\hat{\phi}_{n}(u)} \xrightarrow[n \rightarrow \infty]{{\mathbb{P}}} \binom{H}{\phi(u)},$
\end{center}
for every $u = 1, 2, \ldots, T$, as $n \rightarrow \infty$. Where $\hat{H}_{n}$ and $\hat{\phi}_{n}(u)$ are initial estimators defined in equation (\ref{GPH}) and equation (\ref{GLSE}). {  $[\cdot]$denotes the integer part function.}

\end{theorem}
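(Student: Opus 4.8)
The plan is to split the statement into its two components—consistency of the GPH estimator $\hat H_n$ for the Hurst index and consistency of the GLSE estimator $\hat\phi_n(u)$ for the seasonal coefficients—and handle them in that order, since the second relies on the first through the plug-in $\Gamma^{-1}_{(n-1)T}(\hat H_n)$.

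\emph{Step 1: consistency of $\hat H_n$.} By Proposition 2.4 the additive series $(Y_n)_{n\in\mathbb N}$ is stationary with spectral density $g_{H,\underline\phi}(\lambda) = |\sum_{p=0}^{T-1}\Phi_{\phi(T-p)}(\lambda)|^2 f_{\epsilon^H_n}(\lambda)$. Near the origin the seasonal factor $|\sum_{p=0}^{T-1}\Phi_{\phi(T-p)}(\lambda)|^2$ is bounded and bounded away from zero (its value at $\lambda=0$ is $(\sum_{p=0}^{T-1}(1-\phi(T-p))^{-1})^2>0$ under $(A_1)$, and it is continuous), so $g_{H,\underline\phi}(\lambda)\sim c\,\lambda^{1-2H}$ as $\lambda\to 0^+$ with $c>0$; equivalently $(Y_n)$ has memory parameter $d=H-\tfrac12$. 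Hence the standard consistency theory for the log-periodogram (GPH) regression applies: with $m=[n^\delta]$, $0<\delta<1$, one has $m\to\infty$ and $m/n\to 0$, which are exactly the bandwidth conditions under which $\hat d_n\xrightarrow{\mathbb P} d$ (Geweke--Porter-Hudak, and its rigorous treatment by Robinson and Hurvich--Deo). I would invoke this, checking that the smoothness/regular-variation hypotheses on $g_{H,\underline\phi}$ needed by that theory hold here because the seasonal factor is a smooth, non-vanishing perturbation of $f_{\epsilon^H_n}$. Then $\hat H_n=\hat d_n+\tfrac12\xrightarrow{\mathbb P} H$.

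\emph{Step 2: consistency of $\hat\phi_n(u)$.} Write the GLSE in the oracle form: define $\tilde\phi_n(u)$ by the same ratio as in (\ref{GLSE}) but with the true $\Gamma^{-1}_{(n-1)T}(H)$ in place of $\Gamma^{-1}_{(n-1)T}(\hat H_n)$. For $\tilde\phi_n(u)$, substitute $X_{nT+u}=\phi(u)X_{(n-1)T+u}+\epsilon^H_{nT+u}$ into the numerator to get
\begin{equation}
\tilde\phi_n(u)-\phi(u)=\frac{\Phi^{n-1}_1(u)^{*}\Gamma^{-1}_{(n-1)T}(H)\,\mathcal{E}_n(u)}{\Phi^{n-1}_1(u)^{*}\Gamma^{-1}_{(n-1)T}(H)\,\Phi^{n-1}_1(u)},
\end{equation}
where $\mathcal{E}_n(u)=(\epsilon^H_{2T+u},\dots,\epsilon^H_{nT+u})^{*}$. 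The denominator, normalized by $n$, converges in probability to a strictly positive constant (it is a quadratic form whose expectation is $n$ times a Cesàro average of $\Gamma^{-1}_{(n-1)T}$-weighted covariances, which by the spectral representation and the decay bounds $|(\Gamma^{-1}_{nT})_{j,k}|\le K_2|(k-j)T|^{-2H}$ displayed in the text converges), while the numerator, normalized by $n$, has mean zero and variance $O(1/n)$, again using those same off-diagonal bounds together with the boundedness of the diagonal entries by $K_1$. So $\tilde\phi_n(u)-\phi(u)\xrightarrow{\mathbb P}0$.

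\emph{Step 3: transferring from the oracle to the feasible estimator.} It remains to control $\hat\phi_n(u)-\tilde\phi_n(u)$, i.e. the effect of replacing $H$ by $\hat H_n$. Here I would use that $H\mapsto\Gamma^{-1}_{nT}(H)$ is continuous (indeed locally Lipschitz) in a suitable operator sense uniformly in the dimension—this is where the uniform bounds $K_1,K_2$ on the entries of $\Gamma^{-1}_{nT}(H)$, valid for all $H\in[d_1,d_2]$, are essential—so that $\|\Gamma^{-1}_{(n-1)T}(\hat H_n)-\Gamma^{-1}_{(n-1)T}(H)\|$ is controlled by $|\hat H_n-H|$, which is $o_{\mathbb P}(1)$ by Step 1. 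Combined with Slutsky and the fact that both numerator and (normalized) denominator in (\ref{GLSE}) are continuous functionals of $\Gamma^{-1}_{(n-1)T}$, this gives $\hat\phi_n(u)-\tilde\phi_n(u)\xrightarrow{\mathbb P}0$, and hence $\hat\phi_n(u)\xrightarrow{\mathbb P}\phi(u)$. Assembling Steps 1--3 yields the joint convergence $(\hat H_n,\hat\phi_n(u))\xrightarrow{\mathbb P}(H,\phi(u))$ for each $u$.

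\emph{Main obstacle.} The delicate point is Step 3: making precise the sense in which $\Gamma^{-1}_{(n-1)T}(\hat H_n)$ is close to $\Gamma^{-1}_{(n-1)T}(H)$ \emph{uniformly in $n$}, since the matrix dimension grows. The safe route is not to bound the full matrix norm but to bound directly the two scalar quadratic forms appearing in (\ref{GLSE}): show that each entry $(\Gamma^{-1}_{nT}(H))_{j,k}$ is differentiable in $H$ with derivative obeying the same $|(k-j)T|^{-2H+\varepsilon}$-type decay uniformly on $[d_1,d_2]$ (this follows by differentiating under the integral in the spectral representation and re-running the estimates from \cite{esstafa2019long} and \cite{fox1986large}), so that the increments of the quadratic forms are $O_{\mathbb P}(n|\hat H_n-H|)$ and $O_{\mathbb P}(\sqrt n\,|\hat H_n-H|)$ respectively; dividing by $n$ kills both. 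I would flag that this uniform-in-dimension control, rather than any single step, is the real content of the proof.
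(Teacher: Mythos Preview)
Your overall architecture matches the paper's: first $\hat H_n\xrightarrow{\mathbb P}H$ via GPH theory applied to the stationary additive series $Y_n$, then $\hat\phi_n(u)\xrightarrow{\mathbb P}\phi(u)$ by analysing the ratio in (\ref{GLSE}). Your ``safe route'' in the final paragraph---differentiate the entries of $\Gamma^{-1}_{nT}(H)$ in $H$ and re-run the Fox--Taqqu/Esstafa estimates on the derivative matrices---is exactly what the paper does; it just makes the Taylor expansion of $\Gamma^{-1}_{(n-1)T}(\hat H_n)$ around $H$ explicit to third order, introducing matrices $A^{(1)}_{nT},A^{(2)}_{nT},A^{(3)}_{nT}$ and invoking \cite{hariz2024fast} for the limits $\frac1n\tilde\Phi^{n*}_2 A^{(i)}_{nT}\Phi^{n-1}_1\to k^{(i)}_{H,\phi(u)}$ and \cite{esstafa2019long} for the zeroth-order (oracle) term. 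So your Step~3 is not an obstacle once fleshed out; it is the paper's main computation.

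The genuine gap is in Step~2. You assert that the oracle numerator $\Phi^{n-1}_1(u)^{*}\Gamma^{-1}_{(n-1)T}(H)\,\mathcal E_n(u)$, normalised by $n$, ``has mean zero and variance $O(1/n)$''. The mean is \emph{not} zero: under long memory $\epsilon^H_{(j+1)T+u}$ is correlated with every $X_{kT+u}=\sum_{l\ge 0}\phi^l(u)\epsilon^H_{(k-l)T+u}$, so the cross-covariance matrix $E[\Phi^{n-1}_1(u)\,\mathcal E_n(u)^{*}]$ has no zero entries, and its trace against $\Gamma^{-1}_{(n-1)T}(H)$ does not vanish. The convergence $\frac1n\tilde\Phi^{n*}_2(u)\Gamma^{-1}_{(n-1)T}(H)\Phi^{n-1}_1(u)\to 0$ is true but nontrivial---the paper imports it from \cite{esstafa2019long}, where it is proved by exploiting the near-whitening effect of $\Gamma^{-1}$ on the fGn, not by a mean-zero/Chebyshev argument. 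A minor second point: the first $H$-derivative of the numerator quadratic form is $O_{\mathbb P}(n)$, not $O_{\mathbb P}(\sqrt n)$ (the paper shows $\frac1n\tilde\Phi^{n*}_2 A^{(1)}_{nT}\Phi^{n-1}_1$ tends to a nonzero constant), so your increment bound there should read $O_{\mathbb P}(n|\hat H_n-H|)$; after dividing by $n$ the conclusion is unaffected.
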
 
   \begin{remark}
    In this proof, we demonstrate that the estimators for each pair of parameters are individually consistent. Consequently, it follows that the estimators for all parameters together are also consistent.
  \end{remark}

  \begin{theorem}
    Let $m =[n^{\delta}]$ for some $\frac{1}{2}<{\delta}<{\frac{2}{3}}$. Under conditions $(A_{0})$ and $(A_{1})$, $\hat{\theta}_{n}$ has a (T+1) dimension limiting normal distribution given by 	 
   $$\sqrt{m}\left(
  \begin{array}{c}
  	\hat{H}_{n}-H\\
  {\hat{\phi}_{n}(1)}-{\phi(1)}\\
  	\vdots\\
  {\hat{\phi}_{n}(T)}-{\phi(T)}
  \end{array}
  \right)\xrightarrow [n\rightarrow \infty]{{{\mathcal{L}}}}  \mathcal{N}(0,\Sigma_{\theta}),
  $$
  where ${\hat{\theta}}_{n} = ({\hat{H}_{n}},{\hat{{\phi}}}_{n}(1),{\hat{{\phi}}}_{n}(2),...{\hat{{\phi}}}_{n}(T))$. The covariance matrix $\Sigma_{\theta}$ is of the form $\Sigma_{\theta} = V_{H}\widetilde{\Sigma}_{\theta}$
  	and $V_{H}$	is the asymptotic variance of $\sqrt{m}({\hat{H}}_{n}-H)$, $\widetilde{\Sigma}_{\theta}$ is a built-in singular matrix.
  \end{theorem}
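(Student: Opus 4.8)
\emph{Sketch of the argument.} The two blocks of $\hat\theta_n$ converge at genuinely different rates: the GPH estimator $\hat H_n$ built from the additive series $(Y_n)$ is only $\sqrt m$-consistent, whereas the GLSE $\hat\phi_n(u)$ would be $\sqrt n$-consistent if it used the true $H$ in its weight matrix, and $\sqrt n\gg\sqrt m$ since $m=[n^{\delta}]=o(n)$. Hence at scale $\sqrt m$ the only surviving error in $\hat\phi_n(u)$ is the one created by substituting $\hat H_n$ for $H$ inside $\Gamma^{-1}_{(n-1)T}(\cdot)$. The plan is therefore: (i) prove the marginal CLT $\sqrt m(\hat H_n-H)\xrightarrow{\mathcal{L}}\mathcal{N}(0,V_H)$; (ii) expand $\hat\phi_n(u)$ to first order in the nuisance parameter and show $\sqrt m(\hat\phi_n(u)-\phi(u))=\kappa(u)\sqrt m(\hat H_n-H)+o_{\mathbb{P}}(1)$ for deterministic constants $\kappa(u)=\kappa(u;H,\phi(u),T)$; (iii) stack the $T+1$ relations. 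Step (ii) is exactly what dictates the covariance structure asserted: all coordinates become asymptotically a fixed linear image $v\,\sqrt m(\hat H_n-H)$ of the single scalar Gaussian, with $v=(1,\kappa(1),\dots,\kappa(T))^{*}$, so the limit law is $\mathcal{N}(0,V_H\,vv^{*})$; this is precisely $\Sigma_\theta=V_H\widetilde{\Sigma}_\theta$ with $\widetilde{\Sigma}_\theta=vv^{*}$ a rank-one, hence singular (``built-in singular''), matrix whose $(1,1)$ entry is $V_H$.

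\emph{Step (i): CLT for the GPH estimator.} By Proposition 2.1 the series $(Y_n)$ is stationary, and by Proposition 2.3 its spectral density is $g_{H,\underline{\phi}}(\lambda)=|\sum_{p=0}^{T-1}\Phi_{\phi(T-p)}(\lambda)|^{2}f_{\epsilon_n^H}(\lambda)$, see (\ref{2.7}). Under $(A_1)$ the factor $|\sum_{p=0}^{T-1}\Phi_{\phi(T-p)}(\lambda)|^{2}$ is real-analytic, even, and strictly positive near $\lambda=0$ (its value at $0$ equals $(\sum_{p=0}^{T-1}(1-\phi(T-p))^{-1})^{2}>0$), while $f_{\epsilon_n^H}(\lambda)\sim\frac{C_H}{2}|\lambda|^{1-2H}$ with a smooth logarithmic part away from the pole; thus $g_{H,\underline{\phi}}$ satisfies the usual regularity hypotheses for log-periodogram regression, namely $g_{H,\underline{\phi}}(\lambda)\sim c_0\lambda^{1-2H}$ as $\lambda\to0^{+}$ with sufficient smoothness of $\log g_{H,\underline{\phi}}$ elsewhere. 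Applying the GPH estimator (\ref{GPH}) to $(Y_n)$ then yields, by the arguments of \cite{geweke1983estimation} and \cite{robinson1995log}, $\sqrt m(\hat d_n-d)\xrightarrow{\mathcal{L}}\mathcal{N}(0,V_H)$ with $V_H=\pi^{2}/24$ for the standard estimator; the window $\frac12<\delta<\frac23$ guarantees $m\to\infty$, $m/n\to0$, and that the deterministic bias of $\hat d_n$ is $o(m^{-1/2})$. Since $\hat H_n-H=\hat d_n-d$, this is the CLT for $\hat H_n$, and in particular $\hat H_n-H=O_{\mathbb{P}}(m^{-1/2})=o_{\mathbb{P}}(1)$.

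\emph{Step (ii): expansion of the GLSE in the nuisance parameter.} Writing the recursion (\ref{2.1}) along season $u$ in vector form gives $\Phi_2^{n}(u)=\phi(u)\Phi_1^{n-1}(u)+\mathcal{E}_u$ with $\mathcal{E}_u=(\epsilon^H_{u+2T},\dots,\epsilon^H_{u+nT})^{*}$; substituting into (\ref{GLSE}) and setting $D_n(h):=\Phi_1^{n-1}(u)^{*}\Gamma^{-1}_{(n-1)T}(h)\Phi_1^{n-1}(u)$ yields $\hat\phi_n(u)-\phi(u)=R_n(\hat H_n)$, where $R_n(h):=\mathcal{E}_u^{*}\Gamma^{-1}_{(n-1)T}(h)\Phi_1^{n-1}(u)/D_n(h)$. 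At the true value, $R_n(H)=O_{\mathbb{P}}(n^{-1/2})$ is the known-$H$ GLSE rate, which follows from the entrywise bounds on $\Gamma^{-1}_{nT}(H)$ recalled before the statement together with the quadratic-form arguments of \cite{esstafa2019long} and \cite{hariz2024fast}; hence $\sqrt m\,R_n(H)=O_{\mathbb{P}}(\sqrt{m/n})=o_{\mathbb{P}}(1)$. By the mean value theorem $R_n(\hat H_n)=R_n(H)+(\hat H_n-H)R_n'(\xi_n)$ for some $\xi_n$ lying between $H$ and $\hat H_n$, and $\xi_n\xrightarrow{\mathbb{P}}H$ by step (i). Using $\partial_h\Gamma^{-1}_{(n-1)T}=-\Gamma^{-1}_{(n-1)T}(\partial_h\Gamma_{(n-1)T})\Gamma^{-1}_{(n-1)T}$ and the spectral representation of the entries of $\Gamma^{-1}_{(n-1)T}$ and of $\partial_h\Gamma^{-1}_{(n-1)T}$, a law-of-large-numbers argument for bilinear forms of the long-memory stationary Gaussian vectors (in the spirit of \cite{fox1986large}) shows that, after division by $n$, the numerator and denominator of $R_n'(h)$ converge to continuous deterministic limits uniformly for $h$ in a neighbourhood of $H$, with the denominator limit bounded away from $0$; consequently $R_n'(\xi_n)\xrightarrow{\mathbb{P}}\kappa(u)$ for a deterministic constant $\kappa(u)=\kappa(u;H,\phi(u),T)$. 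Since $\sqrt m(\hat H_n-H)=O_{\mathbb{P}}(1)$ by step (i), this gives $\sqrt m(\hat\phi_n(u)-\phi(u))=\sqrt m\,R_n(\hat H_n)=\kappa(u)\sqrt m(\hat H_n-H)+o_{\mathbb{P}}(1)$.

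\emph{Step (iii) and the main obstacle.} Stacking the $T$ relations of step (ii) with the CLT of step (i) and applying Slutsky's theorem gives the joint convergence to $\mathcal{N}(0,V_H\,vv^{*})$ with $v=(1,\kappa(1),\dots,\kappa(T))^{*}$, i.e.\ $\Sigma_\theta=V_H\widetilde{\Sigma}_\theta$ with $\widetilde{\Sigma}_\theta=vv^{*}$ singular. I expect the hard part to be the differentiability-in-$H$ analysis of the inverse covariance matrix $\Gamma^{-1}_{(n-1)T}(h)$ of the $T$-subsampled fractional Gaussian noise (which is itself long-memory): one has to upgrade the bound $|(\Gamma^{-1}_{nT})_{j,k}|\le K_2|(k-j)T|^{-2H}$ to bounds on $\partial_h(\Gamma^{-1}_{nT})_{j,k}$ that are uniform in $h$ near $H$ and in $n$, and then push these through the limit theory for bilinear forms of long-memory Gaussian sequences to identify the deterministic limits entering $\kappa(u)$ and to obtain the uniformity in $h$ used in the mean-value step; this, together with the control of the GPH bias in step (i), is where the bandwidth window $\frac12<\delta<\frac23$ is actually exploited. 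The supporting estimates are collected in Sections 7 and 8.
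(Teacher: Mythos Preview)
Your proposal is correct and follows essentially the same route as the paper: both obtain the GPH CLT for $\hat H_n$ from the stationary additive series, then Taylor-expand $\hat\phi_n(u)-\phi(u)$ in the nuisance parameter $H$ so that the zero-order (``known-$H$'') GLSE error, being $O_{\mathbb P}(n^{-1/2})$, is killed at scale $\sqrt m$, leaving only the first-order term proportional to $\hat H_n-H$; the rank-one covariance $\widetilde\Sigma_\theta=vv^{*}$ you exhibit is exactly the paper's $U_\theta U_\theta^{*}$. The only cosmetic difference is that you use a single mean-value step and argue uniformity of $R_n'(h)$ near $H$, whereas the paper expands to third order (reusing the bilinear-form limits (\ref{6.10})--(\ref{6.12}) and (\ref{6.14})--(\ref{6.16}) from the consistency proof) so that the remainder is controlled pointwise rather than via a uniform-in-$h$ argument.
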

  \begin{remark} {\citep {hariz2024fast} represents \(\frac{1}{2} < \delta < \frac{2}{3}\), and \cite{hurvich1998mean} states that if \(m = n^{\delta}\), where \(0 < \delta < 1\), it can ensure the asymptotic normality of  \(\hat{H}_{n}\). The condition \(\frac{1}{2}< \delta\) is to ensure that \(\hat{\phi}_{n}(1)\) and \(\hat{\phi}_{n}(2)\) are asymptotically normal. But according to \cite{kutoyants2016multi}, if \(\frac{2}{3} < \delta\), a multi-step estimator may be required, which contradicts our consideration of a one-step estimator. Thus, we consider restricting \(\delta\) to the interval \((\frac{1}{2},\frac{2}{3})\).}
  \end{remark}
  \begin{remark}
  	These results can be extented to the SFAR(p), provided that $\underline{{X}}(u)$ is stationary. 
  \end{remark}
  \begin{remark}
  Additionally, the estimation of $\underline {\phi}$ can be also approached using methods from \cite{brouste2014asymptotic} and \cite{soltane2024asymptotic}.	 
  \end{remark}

\section{One-step estimator of SFAR(1) models}	
  \par In this section, we explore modifications to the initial estimator \(\hat{\theta}_n\) to develop a one-step estimator \(\tilde{\theta}_n\).
  \par We assume that \(Y_n\) is stationary with a spectral density \(g_{H, \underline{\phi}}(\lambda)\), as obtained in equation (\ref{2.7}). For \(g_{H, \underline{\phi}}(\lambda)\) to satisfy the necessary regularity conditions as follow, \\
  $Conditon.1$ For any $\theta =(\theta_{j_{1}},\theta_{j_{2}},\dots,\theta_{j_{T+1}})\in \Theta^{l} $,  where ${\Theta^{l}}$ be an open subset of $R^{T+1}$, $g_{H,\underline {\phi}}(\lambda)$ is three times continuously differentiable on $\Theta^{l} $. In addition, for any $0 \leq k \leq 3$ and $j_{1},...j_{T+1}$, the partial derivative
  \begin{equation}
  	\frac{\partial^{k}}{\partial{\theta_{j_{1}}},...{\partial{\theta_{j_{k}}}}}g_{H,\underline {\phi}}(\lambda),\label{XX}
  \end{equation}
  
  \noindent is a continue equation  on $\Theta^{l} \times [-\pi,\pi]\backslash\left\{0\right\}$, is continuously differentiable with respect to $\lambda$ and its partial derivative
  \begin{equation}
  	\frac{\partial^{k+1}}{\partial\lambda\partial{\theta_{j_{1}}},...{\partial{\theta_{j_{k}}}}}g_{H,\underline {\phi}}(\lambda),  \label{XX}
  \end{equation}
  and is continuous on $\Theta^{l} \times [-\pi,\pi]\backslash\left\{0\right\}$.\\
  $Conditon.2$ There also exists a continuous function $\alpha$: $\Theta^{l}\longrightarrow (-1,1)$, such that for any compact set ${\Theta^{l^{\star}}} \subset \Theta^{l}$ and $\delta>0$, the following conditions hold for every $(\theta, \lambda) \in \Theta^{\star}\times[-\pi,\pi]\backslash\left\{0 \right\}$ are
  \begin{equation}
  	c_{1,\delta,{\Theta^{l}}^{\star}}|\lambda|^{-\alpha(\theta)+\delta} \leq g_{H,\underline {\phi}}(\lambda) \leq c_{2,\delta,{\Theta^{l}}^{\star}}|\lambda|^{-\alpha(\theta)-\delta}, \label{XX}
  \end{equation}  
  \noindent and
  \begin{equation}
  	|\frac{\partial}{\partial \lambda}g_{H,\underline {\phi}}(\lambda)| \leq c_{2,\delta,{\Theta^{l}}^{\star}}|\lambda|^{-\alpha(\theta)-1-\delta},  \label{XX}
  \end{equation}        
  \noindent for any $k \in \left\{1,2,3\right\}$ and any $j \in (1,...,T+1)^{k}$. where 
  \begin{equation}
  	|\frac{\partial^{k}}{\partial{\theta_{j_{1}}},...{\partial{\theta_{j_{k}}}}}g_{H,\underline {\phi}}(\lambda)| \leq c_{2,\delta,\Theta^{l^\star}}|\lambda|^{-\alpha(\theta)-\delta},  
    \label{XX}
  \end{equation} 
   here, $ c_{1,\delta,\Theta^{l^\star}}$ and $ c_{2,\delta,\Theta^{l^\star}}$is some positive finite constant which only depends upon $\delta$ and $\Theta^{l^\star}$. We will prove the spectral density of $Y_{n}$ satisfy regular condition in  auxiliary results.
  
  \begin{proposition}
  	We let $l_{n}$ be the log-likelihood function of a stationary process $(Y_{n})_{n\in \mathbb{N}}$. We assume that $g_{H,\underline {\phi}}(\lambda)$ satisfies the regularity conditions and let $B({\theta, R})$(open ball of center $\theta$ and radius R) for some $R > 0$. For any $t\in B(\theta,R)$, $u \in \mathbb{N}$
  	\begin{equation}
  		l_{n}(\theta + \frac{t}{\sqrt{n}})-l_{n}(\theta) = t\frac{\nabla l_{n}(\theta)}{\sqrt{n}}-\frac{t \mathcal{I}(\theta)t^{*}}{2} + r_{n,\theta}(t),\label{XX} 	
  	\end{equation}
  	\noindent when $n \rightarrow \infty $, the score function         $\nabla(\cdot)$ satisfies 	
    \begin{equation}
  		\frac{\nabla l_{n}(\theta)}{\sqrt{n}}\xrightarrow [n\rightarrow\infty]{\mathbb{P}} \mathcal{N}(0,\mathcal{I}(\theta)) \label{XX}
    \end{equation}
  	\noindent and
    \begin{equation}
  		r_{n,\theta}(t) \xrightarrow [n\rightarrow\infty]{a.s.} 0,  \label{XX}
    \end{equation}
  	\noindent uniformly on each compact set. The Fisher information matrix is given in our case by
  	\begin{equation}
  		\mathcal{I}(\theta) = \frac{1}{4\pi}\left(\int_{-\pi}^{\pi}\frac{\partial log \,   g_{H,\underline {\phi}}(\lambda)}{\partial \theta_{k}}\frac{\partial log \,   g_{H,\underline {\phi}}(\lambda)}{\partial \theta_{j}}d\lambda\right)_{1 \leq k,j \leq T+1}. \label{3.3}
  	\end{equation}
  \end{proposition}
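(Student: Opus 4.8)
The plan is to recognise this statement as the Local Asymptotic Normality (LAN) property for the stationary centred Gaussian sequence $(Y_n)_{n\in\mathbb N}$, and to reduce it to the LAN theorem for long--memory Gaussian processes of \cite{cohen:hal-00638121} (see also \cite{hariz2024fast}), whose hypotheses are exactly $Condition.1$ and $Condition.2$ on the spectral density. The first step is therefore to record that, by the verification carried out in the auxiliary results, $g_{H,\underline{\phi}}$ does satisfy those regularity conditions on $\Theta^{l}\times\big([-\pi,\pi]\setminus\{0\}\big)$, with singularity exponent $\alpha(\theta)=2H-1$ inherited from $f_{\epsilon_n^{H}}(\lambda)\sim\frac{C_H}{2}|\lambda|^{1-2H}$: indeed, the extra factor $\big|\sum_{p=0}^{T-1}\Phi_{\phi(T-p)}(\lambda)\big|^{2}$ in \eqref{2.7} is bounded away from $0$ and $\infty$, together with all its derivatives in $\lambda$ and in $\theta$, uniformly on compact subsets of $\Theta^{l}$, so it merely perturbs the constants in the bounds.

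Next I would write the exact Gaussian log-likelihood
\[
l_n(\theta)=-\tfrac{n}{2}\log(2\pi)-\tfrac12\log\det\Sigma_n(\theta)-\tfrac12\,\underline{Y}_n^{*}\Sigma_n(\theta)^{-1}\underline{Y}_n,
\]
where $\Sigma_n(\theta)=T_n(g_{H,\underline{\phi}})$ is the $n\times n$ Toeplitz matrix with symbol $g_{H,\underline{\phi}}$ and $\underline{Y}_n=(Y_1,\dots,Y_n)^{*}$. Differentiation yields the usual form for the score, $\partial_{\theta_k}l_n(\theta)=-\tfrac12\mathrm{tr}\big(\Sigma_n^{-1}\partial_k\Sigma_n\big)+\tfrac12\,\underline{Y}_n^{*}\Sigma_n^{-1}(\partial_k\Sigma_n)\Sigma_n^{-1}\underline{Y}_n$, with $\mathrm{Cov}(\partial_{\theta_k}l_n,\partial_{\theta_j}l_n)=\tfrac12\mathrm{tr}\big(\Sigma_n^{-1}\partial_k\Sigma_n\Sigma_n^{-1}\partial_j\Sigma_n\big)$; the Hessian and the third derivatives are analogous traces and quadratic forms in products of $\Sigma_n^{-1}$ with the first three $\theta$--derivatives of $\Sigma_n$, each such derivative being again a Toeplitz matrix generated by the corresponding derivative of $g_{H,\underline{\phi}}$. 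The three conclusions then follow from the Toeplitz-form estimates adapted to integrable power-law singularities (Fox--Taqqu, Dahlhaus, as exploited in \cite{cohen:hal-00638121}). First, the trace approximation
\[
\frac1n\mathrm{tr}\big(\Sigma_n^{-1}\partial_k\Sigma_n\,\Sigma_n^{-1}\partial_j\Sigma_n\big)\;\longrightarrow\;\frac1{2\pi}\int_{-\pi}^{\pi}\frac{\partial_{\theta_k}g_{H,\underline{\phi}}(\lambda)}{g_{H,\underline{\phi}}(\lambda)}\,\frac{\partial_{\theta_j}g_{H,\underline{\phi}}(\lambda)}{g_{H,\underline{\phi}}(\lambda)}\,d\lambda ,
\]
together with a central limit theorem for quadratic forms in Gaussian variables with singular spectral density, gives $n^{-1/2}\nabla l_n(\theta)\xrightarrow{\mathcal L}\mathcal N\big(0,\mathcal I(\theta)\big)$ with $\mathcal I(\theta)=\tfrac1{4\pi}\big(\int_{-\pi}^{\pi}\partial_{\theta_k}\log g_{H,\underline{\phi}}\,\partial_{\theta_j}\log g_{H,\underline{\phi}}\,d\lambda\big)_{k,j}$, which is \eqref{3.3}. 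Second, the same approximation gives $-\tfrac1n\nabla^{2}l_n(\theta)\xrightarrow{\mathbb P}\mathcal I(\theta)$. Third, a uniform-on-compacts bound on the third derivatives of $l_n$, obtained from the singularity bounds ($Condition.2$) by dominating the Hilbert--Schmidt norms of the matrices $\Sigma_n^{-1}\partial^{(j)}\Sigma_n$, forces the Taylor remainder $r_{n,\theta}(t)$ to be $O(n^{-1/2})$ a.s.\ uniformly in $t$ on compact sets, hence to vanish. A second-order Taylor expansion of $l_n(\theta+t/\sqrt n)$ about $\theta$ then assembles these three facts into the stated expansion.

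The main obstacle is the conjunction of the third step with the singular-density CLT in the first: unlike the classical short-memory case, $g_{H,\underline{\phi}}$ blows up like $|\lambda|^{-(2H-1)}$ at the origin, so the products $\Sigma_n^{-1}\partial^{(j)}\Sigma_n$ are \emph{not} uniformly bounded in operator norm, and one must invoke the refined Fox--Taqqu/Dahlhaus norm and trace bounds --- and apply them uniformly over the compact parameter set $\Theta^{l^\star}$ --- to control both the third-order remainder and the limiting variance of the quadratic forms. This is precisely the technical content already contained in \cite{cohen:hal-00638121} and \cite{hariz2024fast}; once the regularity of $g_{H,\underline{\phi}}$ has been checked in the auxiliary section, their general LAN result applies with $\theta=(\phi(1),\dots,\phi(T),H)$ and dimension $T+1$, and the explicit Fisher information is read off as the Whittle integral \eqref{3.3}.
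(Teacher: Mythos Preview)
Your proposal is correct and takes essentially the same approach as the paper: the paper's entire proof of this proposition is the single sentence ``This result is a direct consequence of Theorem from \cite{cohen:hal-00638121},'' with the verification of $Condition.1$ and $Condition.2$ for $g_{H,\underline{\phi}}$ deferred to the auxiliary results section. Your write-up simply unpacks what that citation entails (the Toeplitz trace/norm machinery and the Fox--Taqqu CLT behind Cohen et al.'s general LAN theorem), which is more detail than the paper itself provides, but the strategy is identical.
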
 
  This result is a direct consequence of Theorem from \cite{cohen:hal-00638121}.
  \par Since \( g_{H, \underline{\phi}}(\lambda) \) satisfies the regularity conditions, the elements of the Fisher information matrix \(\mathcal{I}(\theta)\) are finite.  After obtaining the Fisher information matrix \(\mathcal{I}(\theta)\) and the log-likelihood function of \((Y_n)_{n \in \mathbb{N}}\), we can compute the one-step estimator as follows 
\begin{equation}
    \tilde{\theta}_n = \hat{\theta}_n + \mathcal{I}(\hat{\theta}_n)^{-1} \frac{1}{n} \nabla l_n(\hat{\theta}_n), \label{3.10}
\end{equation}
We can now analyze the asymptotic properties of the one-step estimator.

  \begin{theorem}
  	  Let ${\hat{\theta}}_{n}$ is the initial estimator of $\theta$, ${\tilde{\theta}}_{n}$ is the one-step estimator of $\theta$. When $g_{H,\underline {\phi}}(\lambda)$ satisfy regular condition, we have a asymptotic normal distribution of ${\tilde{\theta}}_{n}$ that
  	\begin{center}
  		$\sqrt{n}({\tilde{\theta}}_{n}-\theta)\xrightarrow[n\rightarrow\infty]{\mathbb{P}}\mathcal{N}(0,\mathcal{I}(\theta)^{-1}).$
  	\end{center}
  \end{theorem}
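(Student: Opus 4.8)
The plan is to derive the asymptotic normality of the one-step estimator $\tilde\theta_n$ from the LAN-type expansion in Proposition~3.1 together with the consistency of the initial estimator $\hat\theta_n$ established in Theorem~2.1. First I would substitute $t = \sqrt{n}(\hat\theta_n - \theta)$ into the expansion of Proposition~3.1, but the cleaner route is to work directly with the defining equation \eqref{3.10}. Rewriting \eqref{3.10} as
\begin{equation}
\sqrt{n}(\tilde\theta_n - \theta) = \sqrt{n}(\hat\theta_n - \theta) + \mathcal{I}(\hat\theta_n)^{-1}\,\frac{1}{\sqrt{n}}\nabla l_n(\hat\theta_n),\label{XX}
\end{equation}
I would then Taylor-expand the score $\frac{1}{\sqrt n}\nabla l_n(\hat\theta_n)$ around the true value $\theta$. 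Using the differentiability of $g_{H,\underline\phi}$ guaranteed by $Condition.1$, one gets
\begin{equation}
\frac{1}{\sqrt n}\nabla l_n(\hat\theta_n) = \frac{1}{\sqrt n}\nabla l_n(\theta) + \frac{1}{n}\nabla^2 l_n(\bar\theta_n)\cdot \sqrt{n}(\hat\theta_n - \theta),\label{XX}
\end{equation}
where $\bar\theta_n$ lies on the segment between $\hat\theta_n$ and $\theta$. Since $\hat\theta_n \xrightarrow{\mathbb{P}} \theta$ by Theorem~2.1, $\bar\theta_n \xrightarrow{\mathbb{P}} \theta$ as well, and by the standard convergence of the (normalized) observed information for stationary Gaussian long-memory sequences — part of the content of Proposition~3.1 and \cite{cohen:hal-00638121} — we have $\frac{1}{n}\nabla^2 l_n(\bar\theta_n) \xrightarrow{\mathbb{P}} -\mathcal{I}(\theta)$.

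Combining these, the right-hand side of the rewritten \eqref{3.10} becomes
\begin{equation}
\sqrt{n}(\hat\theta_n - \theta) + \mathcal{I}(\hat\theta_n)^{-1}\Big(\frac{1}{\sqrt n}\nabla l_n(\theta) - \mathcal{I}(\theta)\sqrt{n}(\hat\theta_n-\theta) + o_{\mathbb{P}}(1)\cdot\sqrt{n}(\hat\theta_n-\theta)\Big).\label{XX}
\end{equation}
By continuity of $\theta \mapsto \mathcal{I}(\theta)$ (finiteness and continuity of the entries following from $Condition.2$) and consistency of $\hat\theta_n$, $\mathcal{I}(\hat\theta_n)^{-1}\xrightarrow{\mathbb{P}}\mathcal{I}(\theta)^{-1}$, so $\mathcal{I}(\hat\theta_n)^{-1}\mathcal{I}(\theta) = I_{T+1} + o_{\mathbb{P}}(1)$. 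The two copies of $\sqrt{n}(\hat\theta_n-\theta)$ then cancel up to an $o_{\mathbb{P}}(1)\cdot\sqrt{n}(\hat\theta_n-\theta)$ remainder, leaving
\begin{equation}
\sqrt{n}(\tilde\theta_n - \theta) = \mathcal{I}(\theta)^{-1}\frac{1}{\sqrt n}\nabla l_n(\theta) + o_{\mathbb{P}}(1)\cdot\big(1 + \sqrt{n}(\hat\theta_n-\theta)\big).\label{XX}
\end{equation}
Finally, the CLT for the score, $\frac{1}{\sqrt n}\nabla l_n(\theta)\xrightarrow{\mathcal{L}}\mathcal{N}(0,\mathcal{I}(\theta))$ from Proposition~3.1, combined with Slutsky's theorem, yields $\sqrt{n}(\tilde\theta_n-\theta)\xrightarrow{\mathcal{L}}\mathcal{N}(0,\mathcal{I}(\theta)^{-1}\mathcal{I}(\theta)\mathcal{I}(\theta)^{-1}) = \mathcal{N}(0,\mathcal{I}(\theta)^{-1})$.

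The main obstacle — and the step that needs genuine care rather than bookkeeping — is controlling the remainder term so that $o_{\mathbb{P}}(1)\cdot\sqrt{n}(\hat\theta_n-\theta)$ is actually $o_{\mathbb{P}}(1)$. This requires that $\sqrt{n}(\hat\theta_n-\theta)$ be at least $O_{\mathbb{P}}(1)$, i.e. that the initial estimator converges at the parametric $\sqrt{n}$-rate. Here a subtlety arises: Theorem~2.2 gives $\hat\theta_n$ asymptotic normality only at rate $\sqrt{m}$ with $m=[n^\delta]$, $\delta<1$, which is \emph{slower} than $\sqrt n$; so $\sqrt n(\hat\theta_n-\theta)$ is not tight and the naive cancellation fails. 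The resolution — which I would spell out carefully, following \cite{le1956asymptotic} and the treatment in \cite{hariz2024fast} — is that the one-step construction is precisely designed to tolerate a preliminary estimator that is only $n^{1/4}$-consistent (or, more precisely, consistent at a rate $\gg n^{1/4}$): one re-examines the expansion keeping track of the quadratic term $r_{n,\theta}(t)$ from Proposition~3.1, which is uniformly $o(1)$ on compacts after rescaling by $\sqrt n$, and shows that even though $t_n := \sqrt n(\hat\theta_n-\theta)\to\infty$, the combination $t_n + \mathcal{I}(\hat\theta_n)^{-1}n^{-1/2}\nabla l_n(\hat\theta_n)$ still converges because the discretized/truncated version of $\hat\theta_n$ used in LeCam's scheme keeps the relevant errors negligible. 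Thus the crux of the write-up is to invoke the correct LeCam one-step lemma with the $n^{\delta}$-rate initial estimator (valid since $\delta>1/2>1/4$), rather than to attempt the cancellation above verbatim; the rest is a routine application of Proposition~3.1 and Slutsky's theorem.
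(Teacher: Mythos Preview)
Your decomposition is exactly the one the paper uses: starting from \eqref{3.10}, applying the mean-value theorem to $\nabla l_n(\hat\theta_n)$, and splitting $\sqrt{n}(\tilde\theta_n-\theta)$ into a ``score'' piece $B_n=\mathcal{I}(\hat\theta_n)^{-1}\,n^{-1/2}\nabla l_n(\theta)$ and a ``cancellation'' piece
\[
A_n=\mathcal{I}(\hat\theta_n)^{-1}\Bigl(\mathcal{I}(\hat\theta_n)+\tfrac{1}{n}\nabla^2 l_n(\bar\theta_n)\Bigr)\sqrt{n}(\hat\theta_n-\theta).
\]
You also correctly locate the real difficulty: $\sqrt{n}(\hat\theta_n-\theta)$ is not tight (only $\sqrt{m}=\sqrt{n^\delta}$ is the right normalisation), so ``bracket $=o_{\mathbb P}(1)$'' is not enough; one must show the bracket vanishes \emph{fast enough} to absorb the $n^{(1-\delta)/2}$ blow-up.

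Where your proposal and the paper diverge is in how that last step is executed. You defer to ``the LeCam one-step lemma'' and the LAN remainder $r_{n,\theta}(t)$. But Proposition~3.1 only gives $r_{n,\theta}(t)\to 0$ uniformly on \emph{compact} sets, and your $t_n=\sqrt n(\hat\theta_n-\theta)$ leaves every compact; the discretisation trick you allude to is one way out, but you do not actually carry it through. The paper instead avoids the LAN expansion altogether at this stage and proves three quantitative lemmas that pin down the rate of the bracket directly:
\begin{itemize}
\item[(i)] $\|\mathcal{I}(\hat\theta_n)-\mathcal{I}(\theta)\|\le K\|\hat\theta_n-\theta\|=O_{\mathbb P}(n^{-\delta/2})$ (local Lipschitz continuity of the Fisher information);
\item[(ii)] $\mathcal{I}(\theta)+n^{-1}\nabla^2 l_n(\theta)=O_{\mathbb P}(n^{-1/2})$ (a genuine rate, obtained from \cite{lieberman2012asymptotic}, not just $o_{\mathbb P}(1)$);
\item[(iii)] $n^{-1}\bigl(\nabla^2 l_n(\bar\theta_n)-\nabla^2 l_n(\theta)\bigr)=O_{\mathbb P}(n^{k}\|\bar\theta_n-\theta\|)$ for any $k>0$ (control of the third derivative of the log-likelihood).
\end{itemize}
Summing these gives $\mathcal{I}(\hat\theta_n)+n^{-1}\nabla^2 l_n(\bar\theta_n)=O_{\mathbb P}(n^{-\delta/2})+O_{\mathbb P}(n^{-1/2})+O_{\mathbb P}(n^{k-\delta/2})$, and multiplying by $\sqrt{n}(\hat\theta_n-\theta)=O_{\mathbb P}(n^{(1-\delta)/2})$ yields $A_n=O_{\mathbb P}(n^{(1-2\delta)/2})+O_{\mathbb P}(n^{-\delta/2})+O_{\mathbb P}(n^{(1-2\delta)/2+k})$, which is $o_{\mathbb P}(1)$ precisely when $\delta>1/2$ --- this is where the restriction $\tfrac12<\delta<\tfrac23$ in Theorem~2.5 is actually used. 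The $B_n$ piece is then handled exactly as you describe, via Proposition~3.1 and Slutsky.

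So your outline is correct and reaches the same destination, but the step you label ``invoke the correct LeCam one-step lemma'' is doing all the work, and in this long-memory Gaussian setting that lemma is not off-the-shelf: verifying its hypotheses amounts to proving (i)--(iii) above. If you rewrite your proof, replace the LAN-remainder argument by these three explicit rate estimates (citing \cite{cohen:hal-00638121} and \cite{lieberman2012asymptotic} for (ii) and (iii)); that closes the gap and makes transparent why $\delta>1/2$ is needed.
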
 
  \begin{remark}
    The parameter \(\theta\) should not lie on the boundary of the parameter space \(\Theta^l\).
\end{remark}
\begin{remark}
   {The one-step estimatior can be applied more generally even if the initial estimator \(\hat{\theta}_n\) does not satisfy asymptotic normality. According to proposition 2.3 in Hariz (2025), if the initial estimator with convergence speed lower than $\sqrt{n}$ and the spectral density of time series meets the regular condition, then the one-step estimator \(\tilde{\theta}_n\) can still achieve asymptotic normality.}
\end{remark}
\begin{remark}
 {One-step estimator can achieve Hájek's lower bound, thus it is  asymptotically efficient in the local minimax sense. We can find related conclusions in \cite{brouste2020one}, \cite{cohen:hal-00638121}.}
\end{remark}

 \section{Simulation study}   
    
    \par According to equation (\ref{2.5}), the likelihood function based on the sample \(\underline{Y}^{(n)} = (Y_{0}, Y_{1}, \ldots, Y_{n-1})\) is given by
\begin{equation}
    l_{n}(\theta) = -\frac{1}{2} \log \det \left( \Gamma^{Y}_{n}(\theta) \right) - \frac{1}{2} \underline{Y}^{(n)^{*}} \Gamma^{Y}_{n}(\theta) \underline{Y}^{(n)}, \label{XX}
\end{equation}
where \(\Gamma^{Y}_{n}(\theta)\) is the covariance matrix of \(\underline{Y}^{(n)}\). For any \(K \in \mathbb{N}\),
\begin{equation}
     Cov (Y_{0}, Y_{k}) = \int_{-\pi}^{\pi} \exp(ik\lambda) \, g_{H, \underline{\phi}}(\lambda) \, d\lambda, \label{XX}
\end{equation}
where \(  Cov(\cdot,\cdot) \) denotes the covariance. The score function with respect to \(\theta\) is given by
\begin{equation}
    \frac{\partial l_{n}(\theta)}{\partial \theta_{i}} = -\frac{1}{2}  Tr   \left( \left( \Gamma^{Y}_{n}(\theta) \right)^{-1} \frac{\partial \Gamma^{Y}_{n}(\theta)}{\partial \theta_{i}} \right) + \frac{1}{2} \underline{Y}^{(n)^{*}} \left( \Gamma^{Y}_{n}(\theta) \right)^{-1} \frac{\partial \Gamma^{Y}_{n}(\theta)}{\partial \theta_{i}} \left( \Gamma^{Y}_{n}(\theta) \right)^{-1} \underline{Y}^{(n)}, \label{XX}
\end{equation}
where \( Tr(\cdot)\) denotes the trace of a matrix. The Fisher information matrix (FIM) can be deduced from equation (\ref{3.3}). We simulate the spectral density and its derivatives using the method described in \cite{hariz2024fast}, then plug the FIM and score functions into equation (\ref{3.10}) to compute the one-step estimator numerically.

{For each set of parameters, specifically \((\phi(1),\phi(2),H)=(0.6,0.2,0.8)\) and \((\phi(1),\phi(2),H)=(0.2,0.8,0.6)\), we conduct \(M = 1000\) Monte Carlo simulations. The sample sizes considered are \(n = 100\), \(n = 1000\), and \(n = 2000\). }The number of Fourier frequencies for the initial estimations is set as \(m = [  n^{0.6}]\) and remains fixed throughout the simulations. 
 Without loss of generality, we assume \(T = 2\), and the spectral density of \(Y_{n}\) in this case is given by
 \begin{equation}
    g_{H, \phi(1), \phi(2)}(\lambda) = \frac{2 + \phi^{2}(1) + \phi^{2}(2) + A \cos \lambda - B \cos 2\lambda - C \cos 3\lambda}{\left(1 - 2 \phi(2) \cos 2\lambda + \phi^{2}(2)\right)\left(1 - 2 \phi(1) \cos 2\lambda + \phi^{2}(1)\right)} f_{\epsilon^{H}_{n}}(\lambda),
\end{equation}
where \(A = 2 + 2 \phi(1) \phi(2) - 2 \phi(1)\), \(B = 2 (\phi(1) + \phi(2))\), and \(C = 2 \phi(2)\).
 
    \begin{figure}[htbp]
    \centering
   \includegraphics[width=1\textwidth]{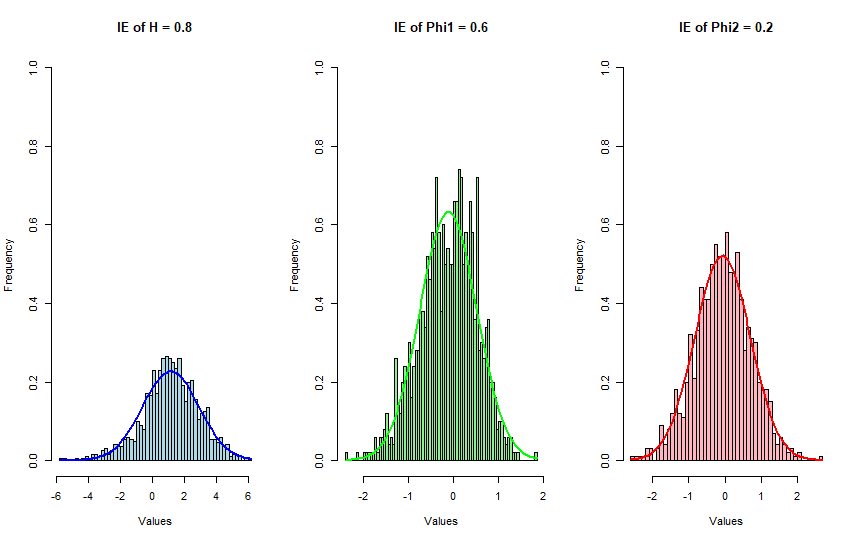}
   \includegraphics[width=1\textwidth]{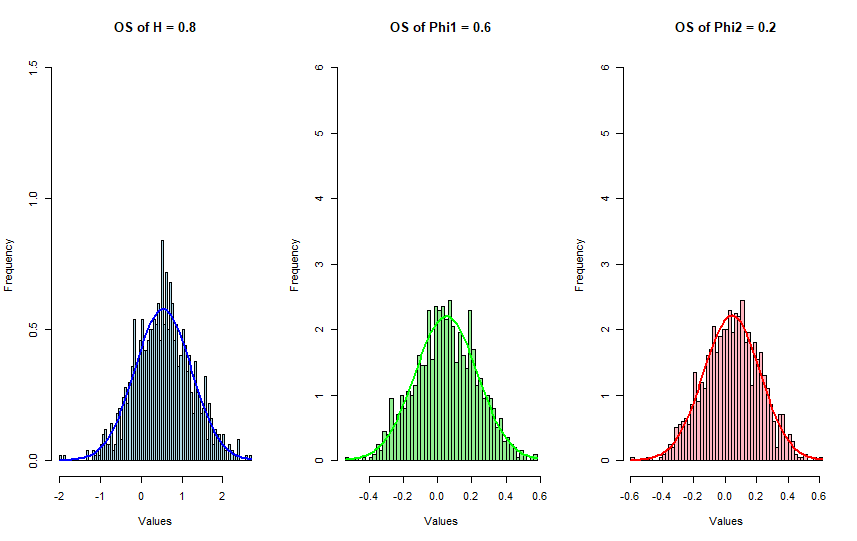}
    \caption{{The simulation of initial estimator and one-step estimator where $\theta=(0.6,0.2,0.8)$ for $m = [n^{\frac{3}{5}}]$, $ n=100$.}}
    \label{628h}
\end{figure}

  \begin{figure}[htbp]
    \centering
   \includegraphics[width=1\textwidth]{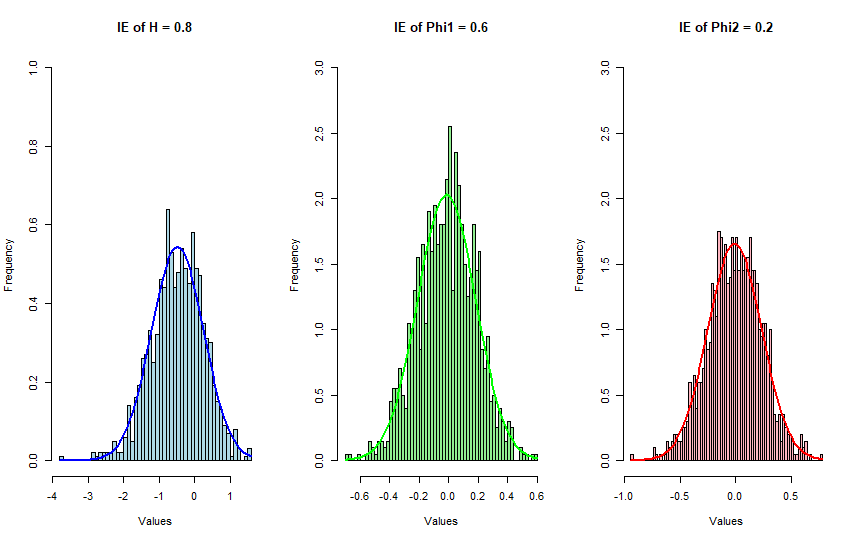}
   \includegraphics[width=1\textwidth]{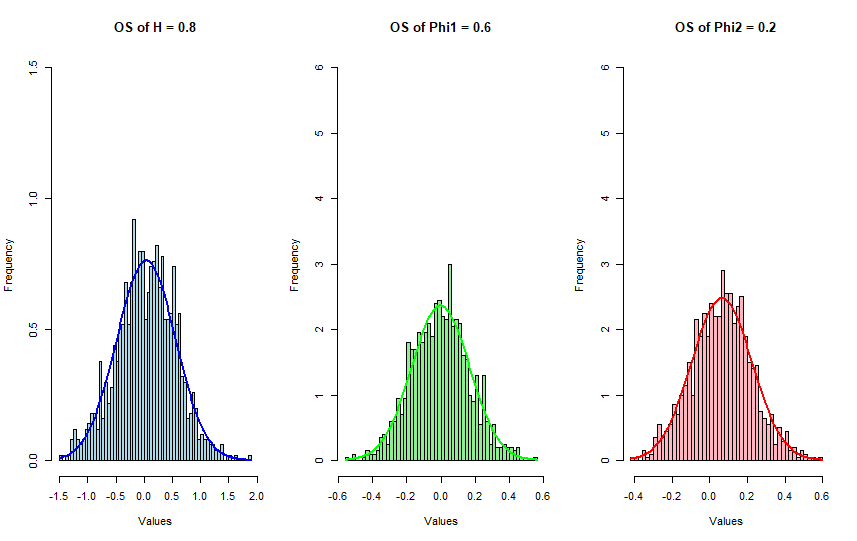}
    \caption{{The simulation of initial estimator and one-step estimator where $\theta=(0.6,0.2,0.8)$ for $m = [n^{\frac{3}{5}}]$,$ n=1000$.}}
    \label{628t}
\end{figure}
 
  \begin{figure}[htbp]
    \centering
   \includegraphics[width=1\textwidth]{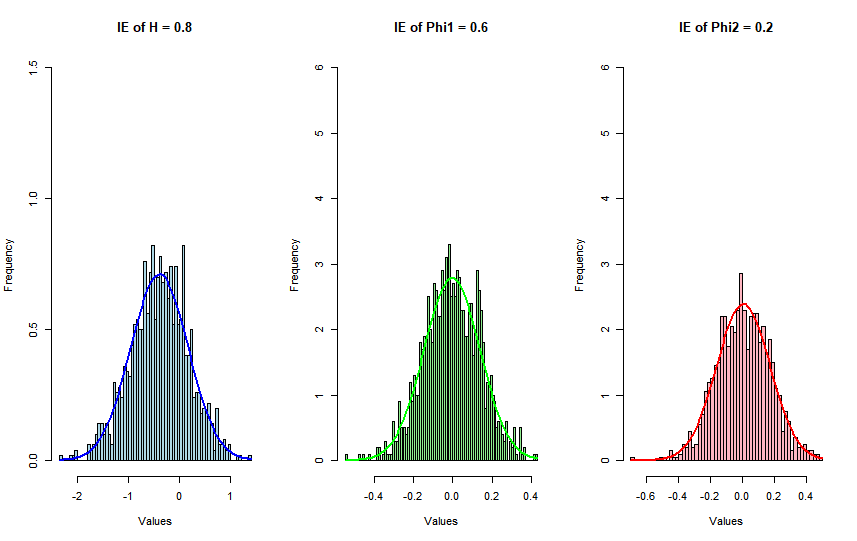}
   \includegraphics[width=1\textwidth]{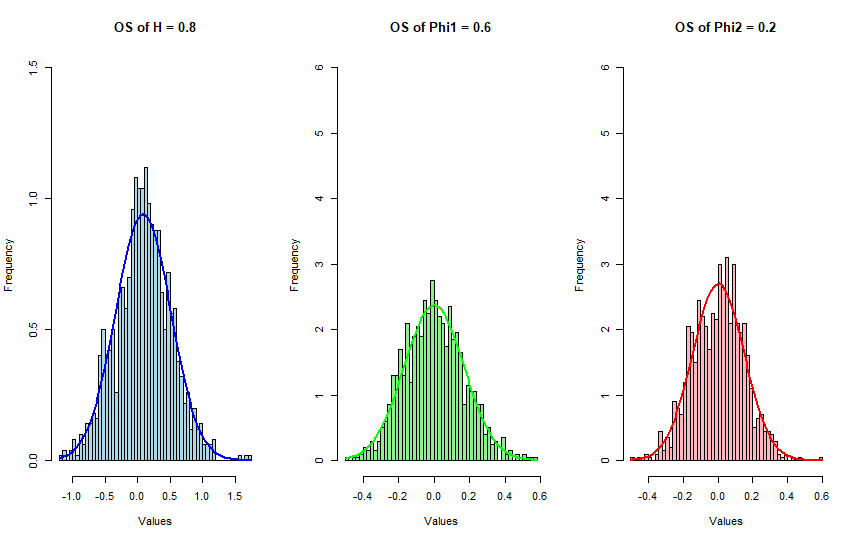}
    \caption{{The simulation of initial estimator and one-step estimator where $\theta=(0.6,0.2,0.8)$ for $m = [n^{\frac{3}{5}}]$,$ n=2000$.}}
    \label{6282t}
\end{figure}

\newpage
\begin{table}[htbp]
    \tbl{{The Bias and RMSE of Initial estimator 
 and One-step estimator for $\theta=(0.6,0.2,0.8)$ when $n = 100$}}
    {\begin{tabular}{|c|c|c|c|c|}
            \hline
            {n = 100} & {B IE} & {B OS} & {RMSE IE} & {RMSE OS} \\
            \hline
            {$H$} & {0.1462} & {0.0683} & {0.2656} & {0.1106} \\
            {$\phi(1)$} & {-0.0143} & {0.0064} & {0.0807} & {0.0237} \\
            {$\phi(2)$} & {-0.0082} & {-0.0055} & {0.0972} & {0.0234} \\
            \hline
        \end{tabular}}
    \label{table1}
\end{table}
\begin{table}[htbp]
    \tbl{{The Bias and RMSE of Initial estimator 
 and One-step estimator for $\theta=(0.6,0.2,0.8)$ when $n = 1000$}}
    {\begin{tabular}{|c|c|c|c|c|}
            \hline
            {n=1000} & {B IE} & {B OS} & {RMSE IE} & {RMSE OS} \\
            \hline
            {$H$} & {-0.0599} & {0.0471} & {0.1101} & {0.0660} \\
            {$\phi(1)$} & {-0.0015} & {-0.0002} & {0.0248} & {0.0212} \\
            {$\phi(2)$} & {-0.0010} & {-0.0081} & {0.0304} & {0.0218} \\
            \hline
        \end{tabular}}
    \label{table2}
\end{table}

\begin{table}[htbp]
    \tbl{{The Bias and RMSE of Initial estimator 
 and One-step estimator for $\theta=(0.6,0.2,0.8)$ when $n = 2000$}}
    {\begin{tabular}{|c|c|c|c|c|}
            \hline
            {n=2000} & {B IE} & {B OS} & {RMSE IE} & {RMSE OS} \\
            \hline
            {$H$} & {-0.0497} & {0.0112} & {0.0864} & {0.0545} \\
            {$\phi(1)$} & {-0.0004} & {0.0001} & {0.0180} & {0.0212} \\
            {$\phi(2)$} & {0.0005} & {0.0003} & {0.0210} & {0.0186} \\
            \hline
        \end{tabular}}
    \label{table3}
\end{table}

     \begin{figure}[htbp]
    \centering
   \includegraphics[width=1\textwidth]{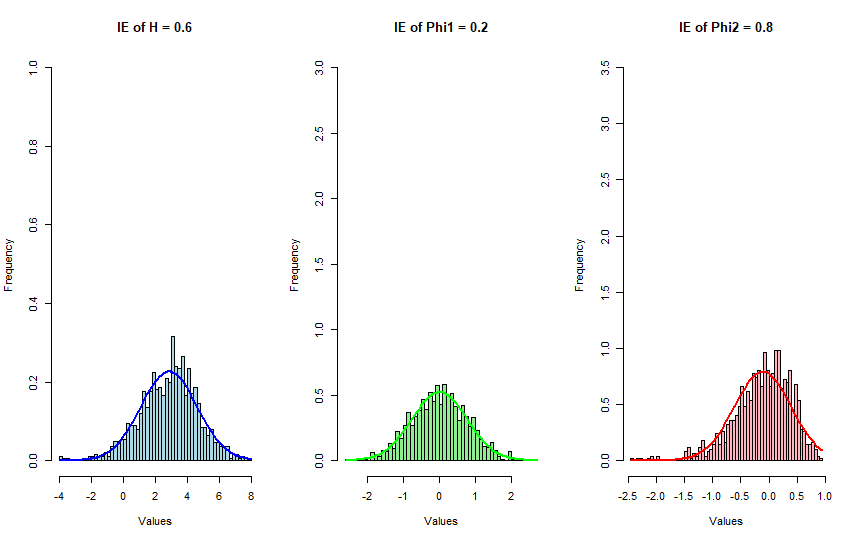}
   \includegraphics[width=1\textwidth]{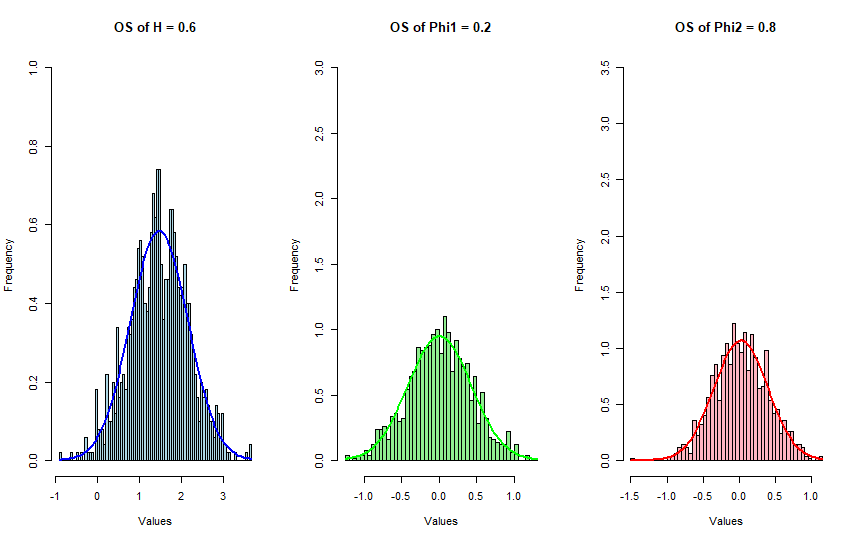}
    \caption{{The simulation of initial estimator and one-step estimator where $\theta=(0.2,0.8,0.6)$ for $m = [n^{\frac{3}{5}}]$, $ n=100$.}}
    \label{286h}
\end{figure}

  \begin{figure}[htbp]
    \centering
   \includegraphics[width=1\textwidth]{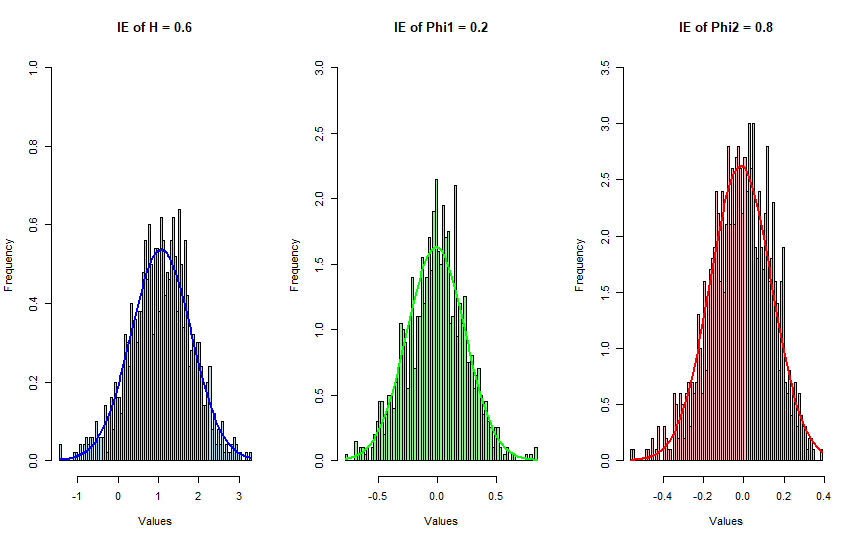}
   \includegraphics[width=1\textwidth]{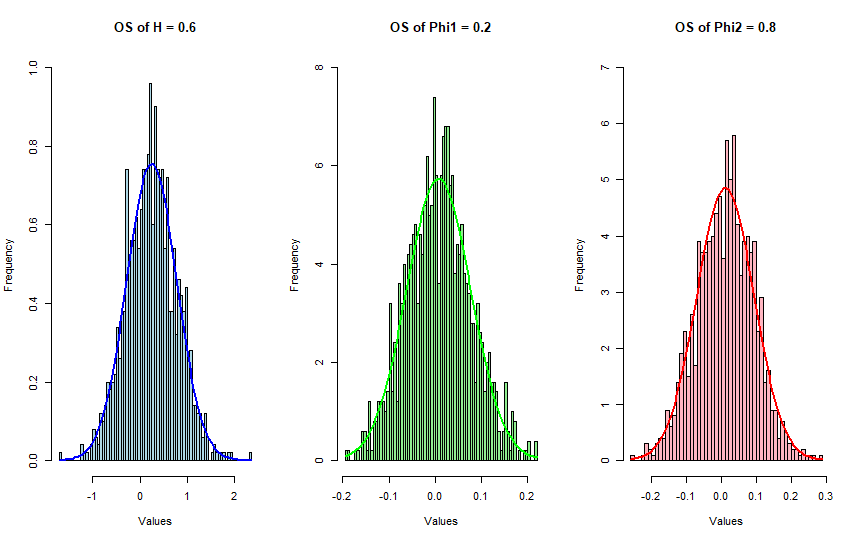}
    \caption{{The simulation of initial estimator and one-step estimator where $\theta=(0.2,0.8,0.6)$ for $m = [n^{\frac{3}{5}}]$,$ n=1000$.}}
    \label{286t}
\end{figure}

  \begin{figure}[htbp]
    \centering
   \includegraphics[width=1\textwidth]{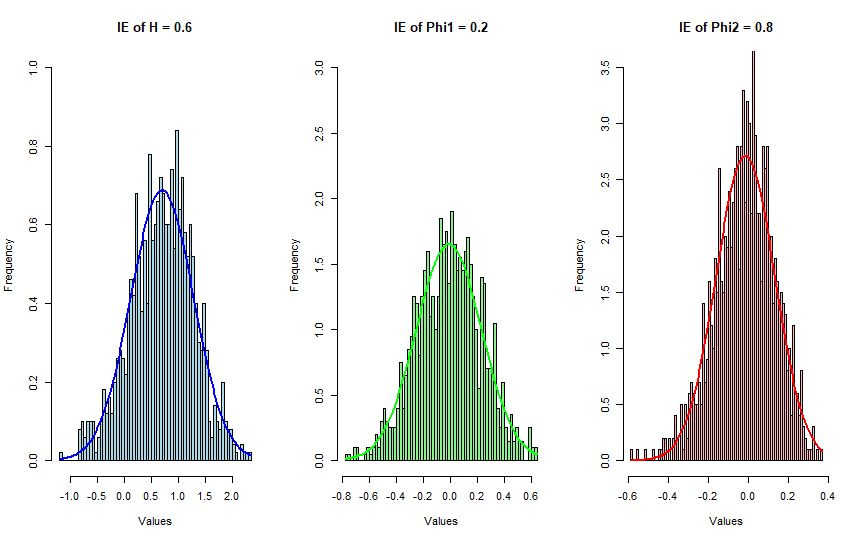}
   \includegraphics[width=1\textwidth]{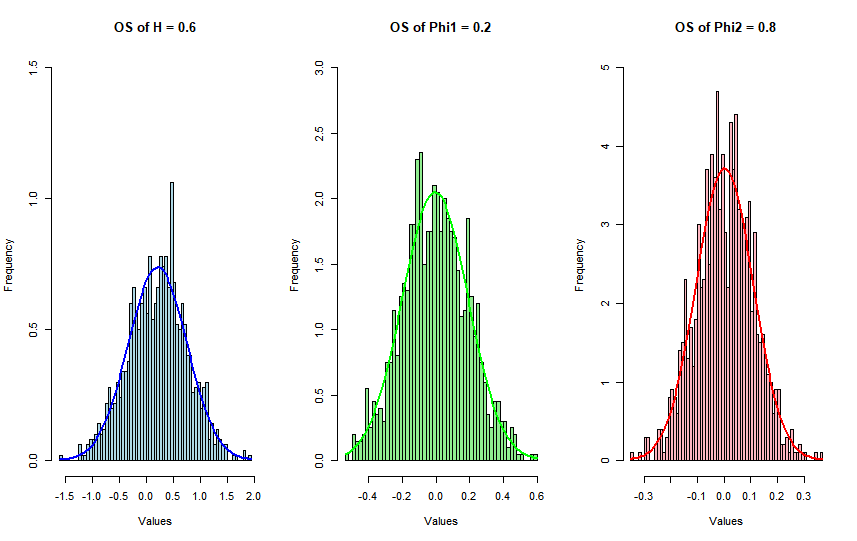}
    \caption{{The simulation of initial estimator and one-step estimator where $\theta=(0.2,0.8,0.6)$ for $m = [n^{\frac{3}{5}}]$,$ n=2000$.}}
    \label{286tt}
\end{figure}
\newpage
\begin{table}[htbp]
    \tbl{{The Bias and RMSE of Initial estimator 
 and One-step estimator for $\theta=(0.2,0.8,0.6)$ when $n = 100$.}}
    {\begin{tabular}{|c|c|c|c|c|}
            \hline
            {n=100} & {B IE} & {B OS} & {RMSE IE} & {RMSE OS} \\
            \hline
            {$H$} & {0.3597} & {0.1846} & {0.4227} & {0.2036} \\
            {$\phi(1)$} & {-0.0009} & {0.0010} & {0.0962} & {0.0530} \\
            {$\phi(2)$} & {-0.0140} & {0.0028} & {0.0651} & {0.0471} \\
            \hline
        \end{tabular}}
    \label{table4}
\end{table}

\begin{table}[htbp]
    \tbl{{The Bias and RMSE of Initial estimator 
 and One-step estimator for $\theta=(0.2,0.8,0.6)$ when $n = 1000$.}}
    {\begin{tabular}{|c|c|c|c|c|}
            \hline
            {n=1000} & {B IE} & {B OS} & {RMSE IE} & {RMSE OS} \\
            \hline
            {$H$} & {0.1351} & {0.0312} & {0.1644} & {0.0736} \\
            {$\phi(1)$} & {-0.0014} & {-0.0009} & {0.0309} & {0.0088} \\
            {$\phi(2)$} & {-0.0019} & {0.0014} & {0.0193} & {0.0104} \\
            \hline
        \end{tabular}}
    \label{table5}
\end{table}

\begin{table}[htbp]
    \tbl{{The Bias and RMSE of Initial estimator 
 and One-step estimator for $\theta=(0.2,0.8,0.6)$ when $n = 2000$.}}
    {\begin{tabular}{|c|c|c|c|c|}
            \hline
            {n=2000} & {B IE} & {B OS} & {RMSE IE} & {RMSE OS} \\
            \hline
            {$H$} & {0.0883} & {0.0274} & {0.1146} & {0.0735} \\
            {$\phi(1)$} & {0.0011} & {0.0006} & {0.0305} & {0.0245} \\
            {$\phi(2)$} & {-0.0032} & {0.0003} & {0.0186} & {0.0135} \\
            \hline
        \end{tabular}}
    \label{table6}
\end{table}
\par {Figure $\ref{628h}$ ,\ref{628t} and \ref{6282t} depict the frequency distribution of statistical errors for the initial estimatior and one-step estimatior of the SFAR(1) model with parameters \(\phi(1) = 0.6\), \(\phi(2) = 0.2\), and \(H = 0.8\). Figure $\ref{286h}$ ,\ref{286t} and \ref{286tt} depict the frequency distribution of statistical errors for the initial estimatior and one-step estimatior of the SFAR(1) model with parameters \(\phi(1) = 0.2\), \(\phi(2) = 0.8\), and \(H = 0.6\).} 
\par {In all the tables, B stands for Bias, IE represents initial estimator, and OS denotes one-step estimator. From the above tables, it can be seen that the OS estimator shows a significant improvement in the estimation of $H$. From these figures and the accompanying table, it is evident that the one-step estimatior outperforms the initial estimation, with a particularly notable improvement in estimating the parameter $H$, at the same time, we found that as the sample size increases, the estimation becomes more efficient. According to \cite{hariz2024fast} and our simulations, the one-step estimation also has a faster running speed.} 
\section{Conclusions and perspectives} 
  \par In this paper, we propose a simple and effective method for estimating the parameters of the SFAR model individually, and we derive the asymptotic properties of this method. We address the difficulty of parameter estimation caused by the non-stationarity of the model by creating new subseries and obtaining an explicit form for the spectral density of the additive series.
 \par The one-step procedure is essentially a gradient descent approach, achieving the \(\sqrt{n}\) rate with optimal variance. 
 \par Our results can be extended to SARIMA models by adjusting the calculation of the covariance matrix of the noise and the spectral density of \(X_n\). Additionally, more effective initial estimators can be utilized for the one-step procedure, similar to the approach taken by \cite{hariz2024fast} in the estimation of FARIMA models.
 \par An interesting aspect to consider is that when \(T\) is sufficiently large, even larger than \(n\), but still finite, the effectiveness of this gradient descent approach may diminish. In such cases, alternative methods for optimizing the initial estimator should be explored.
\section{{Application on Real Data}}
{In this section, we will conduct practical modeling and analysis to examine the application effectiveness of the seasonal autoregressive model driven by fractional Gaussian noise in real data.}
\par {The data on the Colorado River runoff in Arizona selected in this paper are from the public data of the United States Geological Survey. This dataset records the monthly river runoff of the Colorado River from 1922 to 2022, with the unit of cubic feet per second. To facilitate modeling, we average the data of each of the 12 months on a quarterly basis, obtaining the quarterly runoff data for the first, second, third, and fourth quarters respectively. First, we calculate and obtain the autocorrelation function (ACF) plot and partial autocorrelation function (PACF) plot of the quarterly runoff data as follows:}
\begin{figure}[h]
    \centering
\includegraphics[width = 1\textwidth]{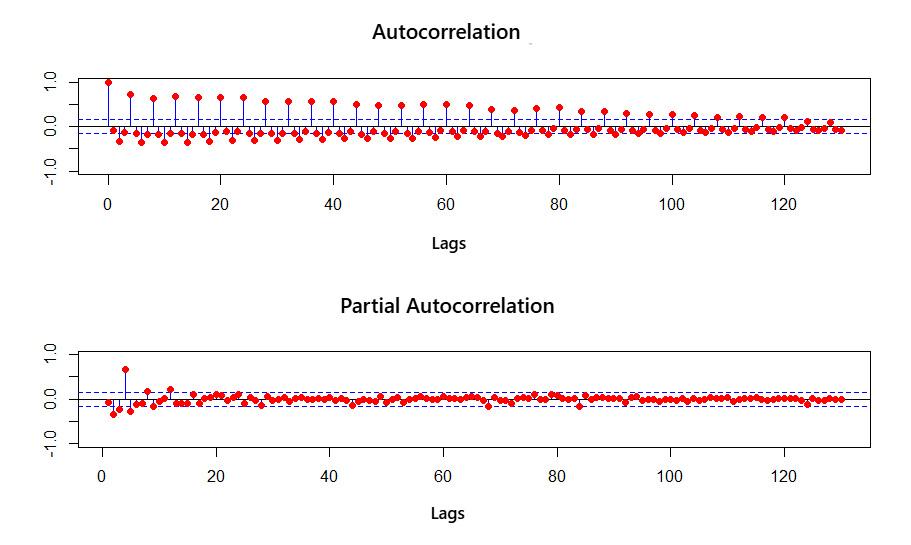}
    \caption{{Autocorrelation and partial autocorrelation coefficient plots of the Colorado River runoff in Arizona}}
    \label{FBMPIC.png}
\end{figure}
\par {From the above figure, we can observe that the autocorrelation function shows a trailing pattern with a slow decay rate, while the partial autocorrelation function cuts off. Therefore, it is appropriate to consider using a fractional AR model with long   memory properties.}

{To avoid data over crowding and considering that the river runoff around 1963 changed significantly for unknown reasons, we extract the runoff data from 1922 to 1962 and draw the following sample path plot.}
\begin{figure}[h]
    \centering
    \includegraphics[width = 0.7\textwidth]{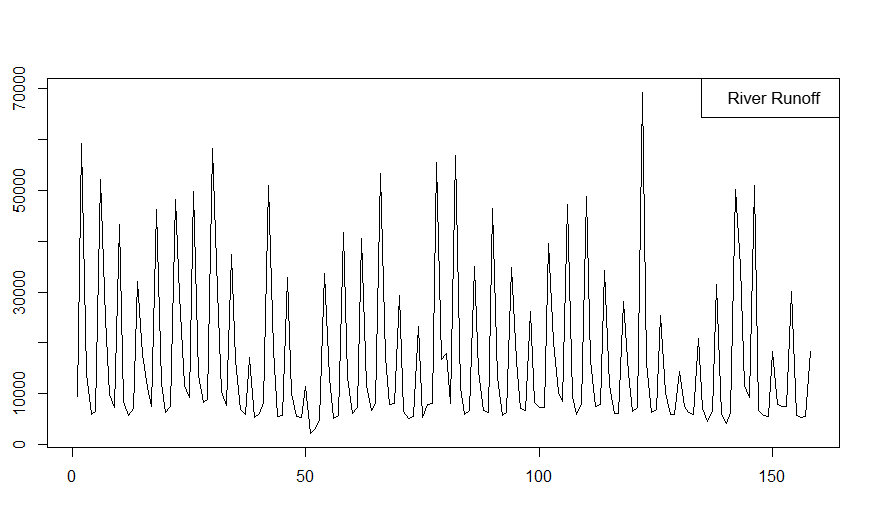}
    \caption{{Sample path plot of the Colorado River runoff in Arizona from 1922 to 1962}}
    \label{path.png}
\end{figure}
\newpage
{As can be seen from Figure \ref{path.png}, the runoff of this river exhibits obvious seasonality. Considering the above two points, in this empirical analysis, we consider using the SFAR(1) model to simulate the above observations and compare it with the simulation of the seasonal autoregressive model driven by white noise.}

\par {The seasonal autoregressive model driven by white noise is as follows:}
{\begin{equation}
    X_{4n + u}=\alpha(4n + u)X_{4(n - 1)+u}+\epsilon_{4n + u},\quad u = 1,2,3,4,\quad n\in\mathbb{Z}^{+},
\end{equation}}
{where $\epsilon_{4n + u}$ is white noise, and $\alpha(1)$, $\alpha(2)$, $\alpha(3)$, $\alpha(4)$ are the model coefficients, satisfying $\alpha(4n + u)=\alpha(u)$.}
\par {We utilized the data  from 1922 to 1962 to derive the parameter estimations of the two models. Subsequently, we computed their RMSEb and MAE against the real data. Finally, we randomly simulated 20 data points within these forty years using these two models.  The results are shown in the following table and figure.}
 \begin{table}[!htbp] 
    \centering
    \caption{{Fitting results of different models}}
    \begin{tabular}{*{6}{c}}
        \toprule
        Model & SFAR & SAR \\
        \midrule
        Parameters & ($\phi(1)$, $\phi(2)$, $\phi(3)$, $\phi(4)$, $H$) & ($\alpha(1)$, $\alpha(2)$, $\alpha(3)$, $\alpha(4)$) \\
        Values & (0.96, 0.82, 0.80, 0.90, 0.60) & (0.80, 0.61, 0.16, 0.53) \\
        RMSE & 9439.37 & 16773.58 \\
        MAE & 6107.16 & 12264.90 \\
        \bottomrule
    \end{tabular}
    \label{table1}
\end{table} 

\newpage
\begin{figure}[h]
    \centering
    \includegraphics[width = 0.8\textwidth]{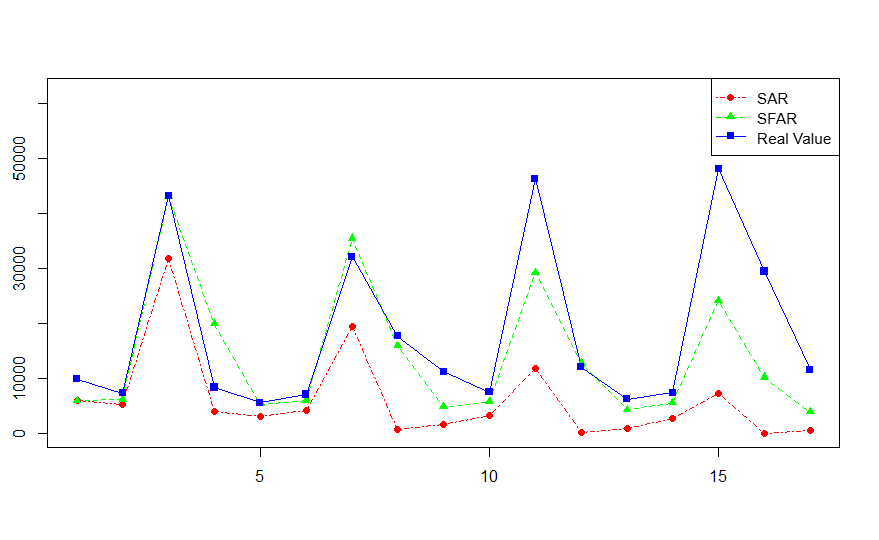}
    \caption{{Model fitting plot}}
    \label{model.png}
\end{figure}
{In the above figure, the values represents the parameters fitted by the SFAR and SAR model. From Figure \ref{model.png} and Table \ref{table1}, we can see that the seasonal autoregressive model driven by fractional noise has smaller RMSE and MAE values and better fitting performance. Therefore, the seasonal autoregressive model with long memory properties is more suitable for the study of the Colorado River runoff.}
  \section{Proofs of the main results} 	
  \par For clarity, we divide the technical results into two parts. {The first part addresses the stationarity and spectral density of the SFAR(1) model, as well as the asymptotic properties of the initial estimator. The second part focuses on the asymptotic properties related to the one-step estimator.}
  \subsection{Proof of Proposition 2.1} 
  \par We will utilize the following lemma to demonstrate the stationarity of $Y_{n}$. 
  \begin{lemma}
  	For any $  u \in \mathbb{N}$, the SFAR(1) model is defined by the recursive scheme 
   \begin{equation}
    X_{nT+u}=\phi(u)X_{(n-1)T+u}+\epsilon^{H}_{nT+u} ,
   \end{equation}
   where $\epsilon^{H}_{nT+u}$ is a fractional Gaussian noise, $\phi(u) = \phi(u+nT) \leq 1$, $n \in \mathbb{N}$, then $(X_{nT+u})_{n \in \mathbb{N}}$ is a stationary process. 
  \end{lemma}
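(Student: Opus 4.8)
The plan is to produce the stationary solution explicitly, as an infinite moving average in the subsampled noise, and then to check three things: (i) the defining series converges, (ii) it satisfies the SFAR(1) recursion, and (iii) it is stationary. Fix $u\in\{1,\dots,T\}$ once and for all. Recall that $\epsilon^H_{k}=B^H_{k+1}-B^H_{k}$ is defined for \emph{all} $k\in\mathbb{Z}$ (increments of two‑sided fBm), and that $(\epsilon^H_k)_{k\in\mathbb{Z}}$ is a centred, stationary Gaussian sequence with autocovariance $\rho$ as in~(2.2). Consequently the subsampled sequence $(\epsilon^H_{nT+u})_{n\in\mathbb{Z}}$ is again a centred, stationary Gaussian sequence, with autocovariance $n\mapsto\rho(nT)$; in particular $\operatorname{Var}(\epsilon^H_{nT+u})=\rho(0)=1$ and $\mathbb{E}|\epsilon^H_{nT+u}|=\sqrt{2/\pi}$ for every $n$.

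Next I would establish convergence. Since $|\phi(u)|<1$ by assumption $(A_1)$, we have $\sum_{j\ge0}|\phi(u)|^{j}<\infty$, hence
$$\mathbb{E}\Big[\sum_{j\ge0}|\phi(u)|^{j}\,|\epsilon^H_{(n-j)T+u}|\Big]=\sqrt{\tfrac{2}{\pi}}\sum_{j\ge0}|\phi(u)|^{j}<\infty ,$$
so the series
$$X_{nT+u}:=\sum_{j=0}^{\infty}\phi(u)^{j}\,\epsilon^H_{(n-j)T+u}$$
converges absolutely almost surely, and it also converges in $L^{2}$ because $\sum_{j\ge0}|\phi(u)|^{j}\,\|\epsilon^H_{(n-j)T+u}\|_{L^2}=\sum_{j\ge0}|\phi(u)|^{j}<\infty$. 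This is precisely the moving‑average representation used in Proposition~2.1. Isolating the $j=0$ term and reindexing gives
$$X_{nT+u}=\epsilon^H_{nT+u}+\phi(u)\sum_{j=0}^{\infty}\phi(u)^{j}\epsilon^H_{(n-1-j)T+u}=\phi(u)\,X_{(n-1)T+u}+\epsilon^H_{nT+u},$$
so the series solves the SFAR(1) recursion (and, by iterating the recursion together with $|\phi(u)|<1$ and $\sup_n\mathbb{E}[X_{nT+u}^2]<\infty$, it is the unique $L^2$ solution, although uniqueness is not needed here).

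Finally, for stationarity I would observe that $X_{nT+u}$ is obtained from the stationary sequence $(\epsilon^H_{mT+u})_{m\in\mathbb{Z}}$ by one fixed, time‑invariant linear filter; hence it is centred, and for every lag $h$,
$$\operatorname{Cov}\big(X_{nT+u},X_{(n+h)T+u}\big)=\sum_{j,k\ge0}\phi(u)^{j+k}\,\rho\big((h+j-k)T\big),$$
the double series being absolutely convergent, which depends on $h$ only. Since $(X_{nT+u})_{n}$ is a Gaussian process (an a.s./$L^2$ limit of Gaussian vectors), second‑order stationarity upgrades to strict stationarity, completing the proof; the same filtering argument, applied jointly to $(\epsilon^H_{mT+1},\dots,\epsilon^H_{mT+T})_m$, shows that $(X_{nT+1},\dots,X_{nT+T})_n$ is jointly stationary, which is what Proposition~2.1 then needs for $Y_n=\sum_{u=1}^{T}X_{nT+u}$. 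The statement is essentially routine; the only point needing a little care is the interchange of limit and expectation used to pass from a.s./$L^2$ convergence to the recursion and to the covariance formula, and this is justified by the geometric summability $\sum_{j\ge0}|\phi(u)|^{j}<\infty$ provided by $(A_1)$ together with the dominated/monotone convergence theorems.
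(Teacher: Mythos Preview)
Your proposal is correct and follows essentially the same route as the paper: both establish the moving-average representation $X_{nT+u}=\sum_{j\ge 0}\phi(u)^{j}\epsilon^{H}_{(n-j)T+u}$, justify its convergence via the geometric summability $\sum_{j\ge 0}|\phi(u)|^{j}<\infty$ (using monotone/dominated convergence), and then verify weak stationarity by computing the autocovariance and checking it depends only on the lag. Your version is marginally more complete in that you explicitly verify the series satisfies the recursion and note the Gaussian upgrade to strict stationarity, but these are refinements of the same argument rather than a different approach.
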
 
 \begin{proof} We verify that the process satisfies the three conditions for weak stationarity individually.\\
   (1) For any $ u \in \mathbb{N}$, $E(X_{nT+u}) = \mu $ is a finite constant.
  \par Because the equation \((1 - \phi(u))z = 0\) has a root outside the unit circle, the process \((X_{nT+u})_{n \in \mathbb{N}}\) is said to be an SFAR(1) process if it can be represented as follows
\begin{equation}
    X_{nT+u} = \sum_{j=0}^{\infty} \phi^{j}(u) \epsilon^{H}_{(n-j)T+u}, \label{lemma1.1}
\end{equation}
 without loss of generality, we assume $E(\epsilon^{H}_{n}) = 0$. For any time series \((X_{nT+u})_{n \in \mathbb{N}}\) under the monotone convergence theorem and the Cauchy-Schwarz inequality, we obtain
  \begin{eqnarray}
  	E|X_{nT+u}| 
  	&\leq&
  	E\sum_{j=0}^{\infty}|\phi^{j}(u)\epsilon^{H}_{(n-j)T+u}|=\sum_{j=0}^{\infty}E|\phi^{j}(u)\epsilon^{H}_{(n-j)T+u}| \label{6.2} \\
  	&\leq &
  	\sum^{\infty}_{j=0}|\phi^{j}(u)|E|\epsilon^{H}_{(n-j)T+u}| \nonumber\\
  	&\leq &
  	C \sum_{j=0}^{\infty}|\phi^{j}(u)|. \nonumber 
  \end{eqnarray}
  \par We know that \( \phi^j(u) = o(\rho^{-j}) \) as \( j \rightarrow \infty \) for \( 1 < \rho < \frac{1}{\phi(u)} \). Consequently, \( \phi^j(u) \) is absolutely summable, i.e., \( \sum_{j=0}^{\infty} |\phi^j(u)| < \infty \). Thus, \( E|X_{nT+u}| < \infty \) as shown in equation (\ref{6.2}). By the monotone convergence theorem, \( \sum_{j=0}^{\infty} \phi^j(u) \) is absolutely convergent almost surely.

\par Considering that
\begin{equation}
    \left| \sum_{j=0}^{k} \phi^j(u) \epsilon^{H}_{(n-j)T+u} \right| \leq \sum_{j=0}^{k} \left| \phi^j(u) \epsilon^{H}_{(n-j)T+u} \right|
\end{equation}
with the dominated convergence theorem, 
\begin{equation}
    E(X_{nT+u}) = \lim_{k \to \infty} E \left( \sum_{j=0}^{k} \phi^j(u) \epsilon^{H}_{(n-j)T+u} \right) = 0. \label{XX}
\end{equation}
  (2) For any $ u \in \mathbb{N}$, $E(X_{nT+u})^{2} \leq \infty $.
\par From equation \eqref{lemma1.1}, we derive
\begin{eqnarray}
    E(X_{nT+u})^2 
    &=& E \left( \sum_{j=0}^{\infty} \phi^j(u) \epsilon^{H}_{(n-j)T+u} \right)^2 \\
    &=& E \left( \sum_{s=0}^{\infty} \sum_{k=0}^{\infty} \phi^s(u) \phi^k(u) \epsilon^{H}_{(n-s)T+u} \epsilon^{H}_{(n-k)T+u} \right) \nonumber, \label{6.5}
\end{eqnarray}
 by applying the conclusion above, we obtain
\begin{equation}
    E \left| X_{nT+u} \right|^2 = \sum_{s=0}^{\infty} \sum_{k=0}^{\infty} |\phi^s(u) \phi^k(u)| E \left| \epsilon^{H}_{(n-s)T+u} \epsilon^{H}_{(n-k)T+u} \right|, \label{6.6666}
 \end{equation}
and the covariance of \( \epsilon^{H}_{(n-s)T+u} \) and \( \epsilon^{H}_{(n-k)T+u} \) is
\begin{equation}
    \rho(\epsilon^{H}_{(n-s)T+u}, \epsilon^{H}_{(n-k)T+u}) = \frac{1}{2} \left( |(s-k)+1|^{2H} - 2 |(s-k)T|^{2H} + |(s-k)T-1|^{2H} \right).
\end{equation}

\par Since \( \phi^j(u) \) is absolutely summable, it is also square summable. Additionally, as \( s - k \to \infty \), \( \rho(\epsilon^{H}_{(n-s)T+u}, \epsilon^{H}_{(n-k)T+u}) \to 0 \), implying that there exists a constant \( M \) such that \( E \left| \epsilon^{H}_{(n-s)T+u} \epsilon^{H}_{(n-k)T+u} \right| \leq M \). Based on the above discussion and equation (\ref{6.5}), we have established that \( E(X_{nT+u})^2 \leq \infty \).\\
(3) For any $ k,s \in \mathbb{N}$, $E(X_{kT+u}-\mu)E(X_{sT+u}-\mu) = \gamma_{(k-s)T}$, which means that the autocovariance of $X_{kT+u}$ and $X_{sT+u}$ depends only on the time interval $(k-s)T$.  
  \par Without losing of generality, we assume $\mu = 0$ and the covariance of  $(X_{nT+u})_{n \in \mathbb{N}}$ be rewritten as
    \begin{align}
      E(X_{kT+u}X_{sT+u})
      & = 
	 E(\sum^{\infty}_{i=0}\sum^{\infty}_{j=0} \phi^{i}(u) \phi^{j}(u)\epsilon^{H}_{(s-i)T+u}\epsilon^{H}_{(k-j)T+u})   \\
     & \quad =\sum^{\infty}_{i=0}\sum^{\infty}_{j=0} \phi^{i}(u) \phi^{j}(u)\frac{1}{2}(|(s-k+j-i)T+1|^{2H} \nonumber \\
    & \qquad-
	 2|(s-k+j-i)T|^{2H}+|(s-k+j-i)T-1|^{2H}), \nonumber
  \end{align}
 for any $ q \in \mathbb{N}$, it follows directly from the above equation that 
 \begin{equation}
 E(X_{(k+q)T+u} X_{(s+q)T+u}) = E(X_{kT+u} X_{sT+u}).
 \end{equation}
  \par Thus, we have shown that $ (X_{nT+u} )_{n \in \mathbb{N}}$ is stationary. Since $Y_{n}$ is a combination of $X_{nT+u}$  in a cyclic manner, its stationarity naturally follows.
\end{proof}
  \subsection{Proof of Proposition 2.2}
      Since the stationarity of \( (X_{nT+u})_{n \in \mathbb{N}} \) has been established, we can determine its spectral density using Theorem 4.4 \citep{brockwell1991time}. $(X_{nT+u})_{n \in \mathbb{N}}$ satisfies the recursion
\begin{eqnarray}
    X_{nT+u} 
    &=& X_{(n-1)T+u} + \epsilon^{H}_{nT+u} \\
    &=& \sum_{j=0}^{\infty} \phi^j(u) \epsilon^{H}_{(n-j)T+u} \nonumber \\
    &=& \sum_{j=0,T,2T,\ldots} \phi^{\frac{j}{T}}(u) \epsilon^{H}_{nT+u-j}.\nonumber
\end{eqnarray}
From the above expression, we obtain the transfer function \( H(e^{-i\lambda}) \) as follows 
\begin{equation}
    H(e^{-i\lambda}) = \sum_{k=0}^{\infty} \phi^k(u) e^{-i\lambda kT}.
\end{equation}
Thus, the spectral density function \( f_{H,\phi(u)}(\lambda) \) is given by 
\begin{equation}
    f_{H,\phi(u)}(\lambda) = |H(e^{-i\lambda})|^2 f_{\epsilon^{H}_{n}}(\lambda) = \frac{f_{\epsilon^{H}_{n}}(\lambda)}{1 - 2\phi(u) \cos(\lambda T) + \phi^2(u)}.
\end{equation}

  \subsection{Proof of Proposition 2.3}
     Because the stationarlity of $(Y_{n})_{n \in \mathbb{N}}$  has been proved above, $Y_{n}$ has the following expression
   \begin{eqnarray}
   	Y_{n}
   	& = &  \sum^{T}_{u=1}X_{nT+u}     \\
   	& = & \sum^{\infty}_{j=0}\phi^{j}(1)\epsilon^{H}_{(n-j)T+1}+\sum^{\infty}_{j=0}\phi^{j}(2)\epsilon^{H}_{(n-j)T+2}+\cdots+\sum^{\infty}_{j=0}\phi^{j}(T)\epsilon^{H}_{(n-j)T+T} \nonumber   \\ 
   	& = &
   	\sum^{\infty}_{k=0} \tilde{h}_{k}\epsilon^{H}_{(n+1)T-k}, \nonumber    
   \end{eqnarray}
   and the coefficient of transfer function has the form of 
   \begin{equation}
   \tilde{h}_{k} = \left\{
   \begin{aligned}
   	&\phi^{\frac{k}{T}}(T) && \text{if } k=0,T,2T,...\\
   	&\phi^{\frac{k-1}{T}}(T-1) && \text{if } k= 1,1+T,1+2T,...\\
   	& \qquad \vdots & \\
   	&\phi^{\frac{k-(T-1)}{T}}(1) && \text{if } k=T-1,2T-1,3T-1,... \\	
   \end{aligned}
   \right.
   \end{equation}
  Then the transfer function is given by 
   \begin{equation}
      H(e^{-i\lambda}) =  \sum^{\infty}_{j=kT}\phi^{\frac{j}{T}}(T)e^{-ij\lambda}+\sum^{\infty}_{j=1+kT}\phi^{\frac{j-1}{T}}(T-1)e^{-ij\lambda}+\cdots+\sum^{\infty}_{j=(k+1)T-1}\phi^{\frac{j-T+1}{T}}(1)e^{-ij\lambda}.
   \end{equation}
   To simplify the notation, we denote
   \begin{equation}
     \Phi_{\phi(T-p)}({\lambda}) = \sum^{\infty}_{j=kT+p}\phi^{\frac{j-p}{T}}(T-p)e^{-ij\lambda} = \frac{e^{-ip\lambda}}{1-\phi(T-p)e^{-i\lambda T}},\quad p=0,1...T-1.
   \end{equation}
   Then, we have
   \begin{equation}
     H(e^{-i\lambda}) = \left(\sum^{T-1}_{P=0} \Phi_{\phi(T-p)}({\lambda})  \right),
  \end{equation}
   and the spectral density of $(Y_{n})_{n \in \mathbb{N}}$ has the representation
   \begin{equation}
   g_{H,\underline{{\phi}}}= |H(e^{-i\lambda})|^{2}f^{H}_{\epsilon^{H}_{n}}(\lambda)=|\sum^{T-1}_{P=0} \Phi_{\phi(T-p)}({\lambda})|^{2}f^{H}_{\epsilon^{H}_{n}}(\lambda).
   \end{equation}
   \begin{remark}
   	$|\cdot|$ denote the modulus of $H(e^{-i\lambda})$.
   \end{remark}
  \subsection{Proof of Theorem 2.4} 
  \par The first part of the proof to to establish the consistency of $\hat H_{n}$, while second part is to verify consistency of $\hat\phi_{n}(u)$.\\
(1) Consistency of $\hat H_{n}$ 
  \par This proof is based on Lemma 5.5 \citep{hariz2024fast} and the corollary \citep{hurvich1998mean}. We can express $f_{H,\phi(u)}$  
  in the following form 
  \begin{eqnarray}
  	\tilde{f}_{H,\phi(u)} (\lambda) 
  	&=& 
  	(1-cos{\lambda T})^{2d}f_{H,\phi(u)} (\lambda)  \\ 
  	&=&
  	(1-cos{\lambda T})^{2d}(1-2\phi(u)cos\lambda T +\phi^{2}(u))^{-1}	{f_{\epsilon_{n}^{H}}(\lambda)} \nonumber \\ 
       &=&
       C_{H}	(1-cos{\lambda T})^{2d+1} (1-2\phi(u)cos\lambda T +\phi^{2}(u))^{-1}   \sum\limits_{j \in Z}\frac{1}{|\lambda+2j\pi|^{2H+1}}.\nonumber
  \end{eqnarray} 
 According to \cite{hurvich1998mean}, it can be concluded that
  \begin{equation}
  	\hat d_{n}-d = -\frac{1}{2S_{m}} \sum^{m}_{j=1}(a_{j} - \overline a_{m})log \;(\tilde{f}_{H,\phi(u)}) - \frac{1}{2S_{m}} \sum^{m}_{j=1}(a_{j}-\overline a_{m})\epsilon_{j}, \label {XX}   	            
  \end{equation}
 where $\epsilon_{j}$ is the error defined in Equation (3) of \cite{hurvich1998mean}. According to the theorem 1 from the aforementioned sources, we have
  \begin{equation}
     \hat d_{n} \xrightarrow[n\rightarrow\infty]{\mathbb{P}}d.
  \end{equation}
 Hence, it is evident that
   \begin{equation}
     \hat H_{n} \xrightarrow[n\rightarrow\infty]{\mathbb{P}}H.
  \end{equation}
(2) Consistency of $\hat\phi_{n}(u)$
  \par Assuming 
   \begin{equation}
   \tilde \Phi^{j}_{i}(u) = (\epsilon^{H}_{u+iT},\epsilon^{H}_{u+(i+1)T},...\epsilon^{H}_{u+jT}), i \leq j,
   \end{equation}
  we can derive the following expression
  \begin{equation}
  	\hat \phi_{n}(u) - \phi (u) = \frac{\tilde \Phi^{n^{*}}_{2}(u) \Gamma^{-1}_{(n-1)T}(\hat H_{n}) \Phi^{n-1}_{1}(u)}{\Phi^{{n-1}^{*}}_{1}(u)\Gamma^{-1}_{(n-1)T}(\hat H_{n}) \Phi^{n-1}_{1}(u)}  .\label{6.8}
  \end{equation}
 We apply the taylor expansion of the matrix $\Gamma^{-1}_{(n-1)T}(\hat H_{n})$ at $H$ to the the numerator, yielding
    \begin{align}
     	\tilde \Phi^{n^{*}}_{2}(u) \Gamma^{-1}_{(n-1)T}(\hat H_{n}) \Phi^{n-1}_{1}(u)
     	&= 
     	\tilde \Phi^{n^{*}}_{2}(u) \Gamma^{-1}_{(n-1)T}(H) \Phi^{n-1}_{1}(u)  \\
     	&\quad + 
     	\tilde \Phi^{n^{*}}_{2}(u) A^{(1)}_{nT}(H)\Phi^{n-1}_{1}(u)(\hat H_{n} - H) \nonumber \\
     	& \quad + 
     	\frac{1}{2} \tilde \Phi^{n^{*}}_{2}(u) A^{(2)}_{nT}(H)\Phi^{n-1}_{1}(u)(\hat H_{n} - H)^{2} \nonumber \\
     	& \quad + 
     	\frac{1}{6}\tilde \Phi^{n^{*}}_{2}(u) A^{(3)}_{nT}(\overline{H}_{n})\Phi^{n-1}_{1}(u)(\hat H_{n} - H)^{3},
     	\nonumber \label{XX}   
     \end{align}    
  thanks to the work of \cite{hariz2024fast}, we have the following three conclusions
  \begin{equation}
  	\frac{1}{n}\tilde \Phi^{n^{*}}_{2}(u) A^{(1)}_{nT}(H) \Phi^{n-1}_{1}(u) \xrightarrow[n\rightarrow\infty]{\mathbb{P}} k^{(1)}_{H,\phi(u)},\label{6.10}
  \end{equation}
  \begin{equation}
  	\frac{1}{n}\tilde \Phi^{n^{*}}_{2}(u) A^{(2)}_{nT}(H) \Phi^{n-1}_{1}(u) \xrightarrow[n\rightarrow\infty]{\mathbb{P}} k^{(2)}_{H,\phi(u)},\label{6.11}
  \end{equation} 
  \begin{equation}
  	n^{-\frac{3}{2}}\tilde \Phi^{n^{*}}_{2}(u) A^{(3)}_{nT}(\overline H_{n}) \Phi^{n-1}_{1}(u) = O_{\mathbb{P}}(1), \label{6.12}
  \end{equation}
    where $k^{(1)}_{H,\phi(u)}$, $k^{(2)}_{H,\phi(u)}$ are constants, $\overline H_{n} \in B(H,|\hat{H}_{n}-H|)$, and $A^{(1)}_{nT}(H)$, $A^{(2}_{nT}(H)$, $A^{(3)}_{nT}(H)$ are
  \begin{eqnarray}
  	A^{(1)}_{nT}(H) = -\Gamma^{-1}_{(n-1)T}(H) \frac{\partial{\Gamma_{(n-1)T}(H)}}{\partial H}\Gamma^{-1}_{(n-1)T}(H),	 
  \end{eqnarray}
  \begin{align}
     	A^{(2)}_{nT}(H) 
     	&=  \Gamma^{-1}_{(n-1)T}(H) \frac{\partial^{2}{\Gamma_{(n-1)T}(H)}}{\partial^{2} H}\Gamma^{-1}_{(n-1)T}(H)   \\
     	& \quad + 2\Gamma^{-1}_{(n-1)T}(H) \frac{\partial{\Gamma_{(n-1)T}(H)}}{\partial H}\Gamma^{-1}_{(n-1)T}(H)\frac{\partial{\Gamma_{(n-1)T}(H)}}{\partial H}\Gamma^{-1}_{(n-1)T}(H), \nonumber
     \end{align}
     \begin{align}
    A^{(3)}_{nT}(H)
    &= 
    -\Gamma^{-1}_{(n-1)T}(H)\frac{\partial^{3}{\Gamma_{(n-1)T}(H)}}{\partial^{3} H}\Gamma^{-1}_{(n-1)T}(H) \\
    &\quad- 
    3\Gamma^{-1}_{(n-1)T}(H)\frac{\partial{\Gamma_{(n-1)T}(H)}}{\partial H} \Gamma^{-1}_{(n-1)T}(H)\frac{\partial^{2}{\Gamma_{(n-1)T}(H)}}{\partial^{2} H}\Gamma^{-1}_{(n-1)T}(H)  \nonumber \\
    &\quad- 
    3\Gamma^{-1}_{(n-1)T}(H) \frac{\partial^{2}{\Gamma_{(n-1)T}(H)}}{\partial^{2} H}\Gamma^{-1}_{(n-1)T}(H)\frac{\partial{\Gamma_{(n-1)T}(H)}}{\partial H}\Gamma^{-1}_{(n-1)T}(H) \nonumber \\
    &\quad- 
    6\Gamma^{-1}_{(n-1)T}(H) \frac{\partial{\Gamma_{(n-1)T}(H)}}{\partial H}\Gamma^{-1}_{(n-1)T}(H)\frac{\partial{\Gamma_{(n-1)T}(H)}}{\partial H} \nonumber \\
    &\quad\quad \times \Gamma^{-1}_{(n-1)T}(H)\frac{\partial{\Gamma_{(n-1)T}(H)}}{\partial H}\Gamma^{-1}_{(n-1)T}(H).  \nonumber 
\end{align}

  \par It has been demonstrated in \cite{esstafa2019long} that
  \begin{equation}
  	\frac{1}{n} \tilde \Phi^{n^{*}}_{2}(u) \Gamma^{-1}_{(n-1)T}( H ) \Phi^{n-1}_{1}(u) \xrightarrow[n\rightarrow\infty]{\mathbb{P}}0, \label{6.13} 
  \end{equation}
  the combination of equations (\ref{6.10}), (\ref{6.11}), 
 (\ref{6.12}), and (\ref{6.13}) allows us to deduce that
  \begin{equation}
  	\frac{1}{n} \tilde \Phi^{n^{*}}_{2}(u) \Gamma^{-1}_{(n-1)T}(\hat H_{n}) \Phi^{n-1}_{1}(u) \xrightarrow[n\rightarrow\infty]{\mathbb{P}}0.   
  \end{equation}
  
   \par Next, we consider the asymptotic properties of the denominator of equation (\ref{6.8}). We can similarly expand the denominator using a Taylor series around $H$, resulting in
      \begin{align}
      	 \Phi^{{n-1}^{*} }_{1}(u) \Gamma^{-1}_{(n-1)T}(\hat H_{n}) \Phi^{n-1}_{1}(u)
      	&=
      	  \Phi^{{n-1}^{*} }_{1}(u) \Gamma^{-1}_{(n-1)T}(H) \Phi^{n-1}_{1}(u)   \\
      	& \quad+
      	\Phi^{{n-1}^{*} }_{1}(u) A^{(1)}_{nT}(H)\Phi^{n-1}_{1}(u)(\hat H_{n} - H) \nonumber \\
      	& \quad+
      	\frac{1}{2} \Phi^{{n-1}^{*} }_{1}(u) A^{(2)}_{nT}(H)\Phi^{n-1}_{1}(u)(\hat H_{n} - H)^{2},\nonumber  
      \end{align}  
 similarly, this part of the proof aligns with Lemma 1 \citep{esstafa2019long} and satisfy
      \begin{equation}
      	\frac{1}{n}\Phi^{{n-1}^{*} }_{1}(u) A^{(1)}_{nT}(H) \Phi^{n-1}_{1}(u) \xrightarrow[n\rightarrow\infty]{\mathbb{P}} k^{(3)}_{H,\phi(u)},\label{6.14}
      \end{equation}
      \begin{equation}
      	n^{-\frac{3}{2}}\Phi^{{n-1}^{*} }_{1}(u) A^{(2)}_{nT}( H )\Phi^{n-1}_{1}(u) = O_{\mathbb{P}}(1) ,\label{6.15}
      \end{equation}
      \begin{equation}
      	\frac{1}{n} \Phi^{{n-1}^{*} }_{1}(u)\Gamma^{-1}_{(n-1)T}(H) \Phi^{n-1}_{1}(u) \xrightarrow[n\rightarrow\infty]{\mathbb{P}} \frac{1}{1-\phi^{2}(u)}, \label{6.16}
      \end{equation}  
   where $k^{(3)}_{H,\phi(u)}$is a constant. The denominator in equation (\ref{6.8}) converges in probability as follows
   \begin{equation}
  \frac{1}{n}\Phi^{{n-1}^{*} }_{1}(u) \Gamma^{-1}_{(n-1)T}(\hat H_{n}) \Phi^{n-1}_{1}(u)\xrightarrow[n\rightarrow\infty]{\mathbb{P}} \frac{1}{1-\phi^{2}(u)}.
  \end{equation}
Combining the above equations, we find that when the numerator of equation (\ref{6.8}) is multiplied by $\frac{1}{n}$, it approaches $0$, while the denominator, also multiplied by $\frac{1}{n}$, converges to a constant. Furthermore, since convergence in probability implies convergence in distribution, we conclude that
\begin{equation}
\hat{\phi}_{n}(u) \xrightarrow[n\rightarrow\infty]{{\mathbb{P}}} \phi(u).
 \end{equation}

This establishes a clear relationship between the asymptotic behavior of the numerator and denominator, leading to the convergence of the estimated function.

  \subsection{Proof of Theorem 2.5}
  
 According to Theorem 2 \citep{hurvich1998mean}, without loss of generality, we can assume $m =[n^{\delta}]$ for some $\frac{1}{2}<{\delta}<  \frac{2}{3} $. denotes convergence in distribution. We thus have
\begin{equation}
\sqrt{m} (\hat{H}_{n} - H) \xrightarrow[n\rightarrow\infty]{\mathcal{L}} \mathcal{N}(0,  {U_{K,\delta}}),
\end{equation}
    where $ {U_{K,\delta}}$ is a constant related to $K$ and $\delta$. Building on the results from equation ($\ref{6.8}$), we establish that
    \begin{equation}
      	\sqrt{m}(\hat \phi_{n}(u) - \phi (u)) = \sqrt{m} \frac{\tilde \Phi^{n^{*}}_{2}(u) \Gamma^{-1}_{(n-1)T}( H) \Phi^{n-1}_{1}(u)}{\Phi^{{n-1}^{*}}_{1}(u)\Gamma^{-1}_{(n-1)T}(\hat H_{n}) \Phi^{n-1}_{1}(u)}  +R^{(1)}_{n},\label{6.34}
    \end{equation}
       according to the proof of consistency and some results on \cite{esstafa2019long},the denominator of the first term on the right side of the above equation satisfies
      \begin{equation}
      	\frac{1}{n}   \Phi^{{n-1}^{*}}_{1}(u) \Gamma^{-1}_{(n-1)T}(\hat H_{n}) \Phi^{n-1}_{1}(u) \xrightarrow[n\rightarrow\infty]{{\mathbb{P}}} \frac{1}{1-\phi^{2}(u)},  
      \end{equation}
      the nominator converge to a normal distribution 
      \begin{equation}
        \frac{1}{\sqrt{n}}\tilde \Phi^{n^{*}}_{2}(u) \Gamma^{-1}_{(n-1)T}(H) \Phi^{n-1}_{1}(u)\xrightarrow[n\rightarrow\infty]{{\mathbb{P}}} \mathcal{N}(0,\frac{1}{1-\phi^{2}(u)}),
     \end{equation}
        when $n \rightarrow \infty $,  the reminder $R^{(1)}_{n}$ converges to 0.\\
	   Thus, we can rewrite equation (\ref{6.34}) as follows
	\begin{eqnarray}
		\sqrt{m}(\hat \phi_{n}(u)-\phi(u))
		& = & 
		\sqrt{m}\frac{\frac{1}{n}\tilde \Phi^{n^{*}}_{2}(u) \Gamma^{-1}_{(n-1)T}(H) \Phi^{n-1}_{1}}{\frac{1}{n}\Phi^{{n-1}^{*}}_{1}(u)\Gamma^{-1}_{(n-1)T}({\hat H}_{n}) \Phi^{n-1}_{1}(u)}+R^{(1)}_{n}   \\
		& = & 
		\frac{\sqrt{m}}{\sqrt{n}}(1-\phi^{2}(u))\frac{1}{\sqrt{n}}\tilde \Phi^{n^{*}}_{2}(u) \Gamma^{-1}_{(n-1)T}(H) \Phi^{n-1}_{1}(u)+R^{(1)}_{n}
		,\nonumber
	\end{eqnarray}
      by slutsky theorem, we can conclude that
      $\sqrt{m}(\hat \phi_{n}(u) - \phi (u))$ converges to a  normal distribution.
   \par Lastly, we aim to present these results in the form of a joint normal distribution. Drawing on the findings from \cite{hariz2024fast}, the asymptotic distribution of $\sqrt{n}(\hat{\phi}_{n}(u)-\phi(u))$ can be expressed as a constant multiple of the asymptotic distribution of $(\hat{H}_{n}-H)$. Moreover, according to the Cramer-Wold theorem, the asymptotic distribution of $\sum_{u = 1}^{T}(\hat{\phi}_{n}(u)-\phi(u))$ still adheres to an asymptotic normal distribution.
Thus, the vector
	  $$(({\hat{\phi}}_{n}(1)-\phi(1)),({\hat{\phi}}_{n}(2)-\phi(2)),...({\hat{\phi}}_{n}(T)-\phi(T)),...\hat H_{n}-H),$$ 
     converges to a Gaussian vector, tending towards a joint normal distribution. The covariance matrix of this vector is
      \begin{equation}
	   \tilde \Sigma_{\theta} = U_{\theta}U^{*}_{\theta},
      \end{equation}
	  where
  \begin{equation}	
      U_{\theta} = \left(
	 \begin{array}{c}
		1\\
		\frac{{C^{(1)}_{H,\phi(1)}}}{({1-\phi^{2}(1)})}\\
		\vdots\\
		\frac{{C^{(1)}_{H,\phi(T)}}}{({1-\phi^{2}(T)})}
	\end{array}
	\right),
 \end{equation}
        ${C^{(1)}_{H,\phi(u)}}$ is the constants related to $\phi(u)$ and $H$.
  
  \subsection{Proof of Theorem 3.1} 
   To prove the theorem 3.1, we need to establish the following three lemmas and verify whether$ g_{H,\underline{\phi}(\lambda)}$ is regular. The regularity conditions of $ g_{H,\underline{\phi}(\lambda)}$ will be demonstrated in the auxiliary results.
     \begin{lemma}
  Let $\theta_{0} \in \Theta$, $\delta>0$, such that for any $\theta \in B(\theta_{0},\delta)$, it holds that  
    \begin{equation}
    ||\mathcal{I}(\theta)-\mathcal{I}(\theta_{0})||\leq K||\theta-\theta_{0}||,
    \end{equation}
where $K$ is some constant.
	 \end{lemma}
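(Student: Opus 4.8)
The plan is to prove that $\theta \mapsto \mathcal{I}(\theta)$ is continuously differentiable on a neighbourhood of $\theta_{0}$ with first-order partial derivatives bounded uniformly there, and then to conclude by the mean value inequality. First I would fix a closed ball $\overline{B}(\theta_{0},\delta_{0}) \subset \mathring{\Theta}^{l}$, which exists because $\theta_{0}$ lies in the open parameter space. Taking the compact set in Condition 2 to be $\Theta^{l^{\star}} = \overline{B}(\theta_{0},\delta_{0})$, the constants $c_{1,\eta,\Theta^{l^{\star}}}$ and $c_{2,\eta,\Theta^{l^{\star}}}$ (I write $\eta$ for the slack parameter there to avoid clashing with the radius $\delta$, and abbreviate $c_{i,\eta}:=c_{i,\eta,\Theta^{l^{\star}}}$) may be chosen once and for all on this ball, for each $\eta>0$. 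Write $L_{k}(\lambda;\theta) = \partial_{\theta_{k}}\log g_{H,\underline{\phi}}(\lambda) = g_{H,\underline{\phi}}(\lambda)^{-1}\,\partial_{\theta_{k}}g_{H,\underline{\phi}}(\lambda)$, so that by \eqref{3.3} the $(k,j)$ entry of $\mathcal{I}(\theta)$ is $\frac{1}{4\pi}\int_{-\pi}^{\pi}L_{k}L_{j}\,d\lambda$, an integral already known to be finite under the regularity conditions.

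The first step is a family of uniform pointwise estimates on $\overline{B}(\theta_{0},\delta_{0})$. Dividing the derivative bounds $|\partial_{\theta_{k}}g_{H,\underline{\phi}}(\lambda)| \le c_{2,\eta}|\lambda|^{-\alpha(\theta)-\eta}$ and $|\partial^{2}_{\theta_{k}\theta_{l}}g_{H,\underline{\phi}}(\lambda)| \le c_{2,\eta}|\lambda|^{-\alpha(\theta)-\eta}$ by the lower bound $g_{H,\underline{\phi}}(\lambda) \ge c_{1,\eta}|\lambda|^{-\alpha(\theta)+\eta}$, the possibly large exponent $-\alpha(\theta)$ cancels; hence there is a constant $C=C(\eta,\delta_{0})$ with $|L_{k}(\lambda;\theta)| \le C|\lambda|^{-2\eta}$ and, via the identity $\partial^{2}_{\theta_{k}\theta_{l}}\log g = g^{-1}\partial^{2}_{\theta_{k}\theta_{l}}g - g^{-2}(\partial_{\theta_{k}}g)(\partial_{\theta_{l}}g)$, also $|\partial^{2}_{\theta_{k}\theta_{l}}\log g_{H,\underline{\phi}}(\lambda)| \le C|\lambda|^{-4\eta}$, for every $\lambda \in [-\pi,\pi]\setminus\{0\}$ and every $\theta \in \overline{B}(\theta_{0},\delta_{0})$. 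Away from $\lambda=0$, Condition 1 guarantees these quantities are continuous on the compact set, so the only delicate point is the behaviour near the origin, which the above bounds control.

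The second step is differentiation under the integral sign. By Condition 1 the integrand is differentiable in $\theta$ and
\[
\partial_{\theta_{l}}(L_{k}L_{j}) = (\partial^{2}_{\theta_{l}\theta_{k}}\log g_{H,\underline{\phi}})\,L_{j} + L_{k}\,(\partial^{2}_{\theta_{l}\theta_{j}}\log g_{H,\underline{\phi}}),
\]
so the previous estimates yield $|\partial_{\theta_{l}}(L_{k}L_{j})| \le 2C^{2}|\lambda|^{-6\eta}$ uniformly on $\overline{B}(\theta_{0},\delta_{0})$. Choosing $\eta < 1/6$ makes $\lambda \mapsto |\lambda|^{-6\eta}$ integrable on $[-\pi,\pi]$, which supplies a $\theta$-independent integrable majorant; the classical theorem on differentiation under the integral sign then shows that each entry of $\mathcal{I}$ is $C^{1}$ on the interior of $\overline{B}(\theta_{0},\delta_{0})$ with
\[
|\partial_{\theta_{l}}\mathcal{I}(\theta)_{k,j}| \le \frac{1}{4\pi}\int_{-\pi}^{\pi}|\partial_{\theta_{l}}(L_{k}L_{j})|\,d\lambda \le \frac{C^{2}}{2\pi}\int_{-\pi}^{\pi}|\lambda|^{-6\eta}\,d\lambda =: K_{0},
\]
a bound independent of $\theta \in \overline{B}(\theta_{0},\delta_{0})$. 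Since this ball is convex and all first-order partials of the entries of $\mathcal{I}$ are bounded by $K_{0}$ on it, the mean value inequality gives $\|\mathcal{I}(\theta)-\mathcal{I}(\theta_{0})\| \le K\|\theta-\theta_{0}\|$ for all $\theta \in B(\theta_{0},\delta)$ upon taking $\delta := \delta_{0}$ and $K := c_{T}K_{0}$, where $c_{T}$ is a dimensional factor accounting for passing from the entrywise bound to the chosen matrix norm on $\mathbb{R}^{(T+1)\times(T+1)}$; this is the assertion.

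The main obstacle is precisely the interchange of $\partial_{\theta}$ and $\int_{-\pi}^{\pi}$: it requires a dominating function that is at once integrable near the singularity $\lambda=0$ and uniform in $\theta$ near $\theta_{0}$. Both features rely on exploiting Condition 2 carefully — the key points being that the large exponent $-\alpha(\theta)$ disappears when passing to logarithmic derivatives, so only the freely chosen slack $\eta$ survives, and that the constants $c_{1,\eta,\Theta^{l^{\star}}}$, $c_{2,\eta,\Theta^{l^{\star}}}$ depend only on $\eta$ and on the compact neighbourhood $\Theta^{l^{\star}}=\overline{B}(\theta_{0},\delta_{0})$, not on the individual point $\theta$. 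Everything else is routine once these uniform bounds are in hand.
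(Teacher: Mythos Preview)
Your proof is correct and follows essentially the same route as the paper's: both arguments show that the entries of $\mathcal{I}(\theta)$ are $C^{1}$ with gradient uniformly bounded on a convex neighbourhood of $\theta_{0}$, and then invoke the mean value inequality. The paper is terser, simply defining $K$ as the supremum of the gradient norm over the ball and citing the regularity conditions (via \cite{cohen:hal-00638121}) for its finiteness, whereas you carry out explicitly the cancellation of the $-\alpha(\theta)$ exponent in the logarithmic derivatives and the choice $\eta<1/6$ that makes the dominating function integrable; this added detail is sound and makes your argument self-contained.
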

       \begin{proof}
       Without loss of generality, let $B(\theta_{0},\delta)$ be a convex set in $\mathbb{R}^{3}$. For ease of notation, $g_{H,\underline {\phi}}(\lambda)$ can be denoted as $g_{\theta}(\lambda)$. According to the relevant conclusions in \cite{cohen:hal-00638121} and the discussion of regularity conditions for $g_{H,\underline {\phi}}(\lambda)$, it is known that for any $k,j\in \left\{1,2,...,d\right\}$ that the following inequality holds
        \begin{equation}
           \left|\frac{1}{4\pi}(\int_{-\pi}^{\pi}\frac{\partial log \,   g_{\theta}(\lambda)}{\partial \theta_{k}}\frac{\partial log \,   g_{\theta}(\lambda)}{\partial \theta_{j}}d\lambda) -\frac{1}{4\pi}(\int_{-\pi}^{\pi}\frac{\partial log \,   g_{\theta_{0}}(\lambda)}{\partial \theta_{0,k}}\frac{\partial log \,   g_{\theta_{0}}(\lambda)}{\partial \theta_{0,j}}d\lambda)\right|
           \leq
           K||\theta-\theta_{0}||, 
       \end{equation}
       $K$ is defined as 
      \begin{equation}
          K = \sup_{\theta \in B(\theta_{0},\delta)}\left\lVert(\frac{\partial}{\partial \theta_{i}}(\int^{\pi}_{-\pi}\frac{\partial log \,   g_{\theta}(\lambda)}{\partial \theta_{k}}\frac{\partial log \,   g_{\theta}(\lambda)}{\partial \theta_{j}}d\lambda))_{1 \leq i \leq d}\right\rVert, 
      \end{equation}
         which is related to $k$ and $j$.
      Furthermore, since the conditions (A1) and (A2) \citep{cohen:hal-00638121} hold, it follows that $K< \infty$, hence the lemma holds.
     \end{proof}
     
     \begin{lemma}
     	For any $\theta \in \Theta^{l}$, it follows from the distribution of the parameter $\theta$ that
\begin{equation}
     \frac{\Delta l_{n}(\theta)}{\sqrt{n}}+\sqrt{n}\mathcal{I}(\theta) = O_{\mathbb{P}}(1).
\end{equation}
     \end{lemma}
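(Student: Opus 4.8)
The plan is to read $\Delta l_{n}(\theta)$ as the Hessian $\nabla^{2} l_{n}(\theta)$ of the exact Gaussian log-likelihood of the sample $\underline{Y}^{(n)}$ and to show that $\frac{1}{n}\Delta l_{n}(\theta)$ concentrates around $-\mathcal{I}(\theta)$ with an error of order $n^{-1/2}$; multiplying through by $\sqrt{n}$ then gives the stated identity. I would first differentiate
$$
l_{n}(\theta) = -\tfrac12\log\det\Gamma^{Y}_{n}(\theta) - \tfrac12\,\big(\underline{Y}^{(n)}\big)^{*}\big(\Gamma^{Y}_{n}(\theta)\big)^{-1}\underline{Y}^{(n)}
$$
twice in $\theta$, using $\partial_{i}\log\det\Gamma^{Y}_{n} = \mathrm{Tr}\big((\Gamma^{Y}_{n})^{-1}\partial_{i}\Gamma^{Y}_{n}\big)$ and $\partial_{i}(\Gamma^{Y}_{n})^{-1} = -(\Gamma^{Y}_{n})^{-1}(\partial_{i}\Gamma^{Y}_{n})(\Gamma^{Y}_{n})^{-1}$, so as to write $\Delta l_{n}(\theta) = A_{n}(\theta) + Q_{n}(\theta)$, where $A_{n}(\theta)$ is a deterministic sum of traces of products of $(\Gamma^{Y}_{n}(\theta))^{-1}$ with the first and second $\theta$-derivatives of $\Gamma^{Y}_{n}(\theta)$, and $Q_{n}(\theta) = \big(\underline{Y}^{(n)}\big)^{*}M_{n}(\theta)\underline{Y}^{(n)}$ is a quadratic form whose symmetric matrix $M_{n}(\theta)$ is again assembled from $(\Gamma^{Y}_{n}(\theta))^{-1}$ and its $\theta$-derivatives.

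Next I would compute the mean. Since, under the parameter $\theta$, $\underline{Y}^{(n)}\sim\mathcal{N}(0,\Gamma^{Y}_{n}(\theta))$, one has $\mathbb{E}[Q_{n}(\theta)] = \mathrm{Tr}\big(M_{n}(\theta)\Gamma^{Y}_{n}(\theta)\big)$, and a short simplification shows that $\mathbb{E}[\Delta l_{n}(\theta)]$ reduces to $-\tfrac12$ times a sum of terms of the form $\mathrm{Tr}\big((\Gamma^{Y}_{n}(\theta))^{-1}\partial_{k}\Gamma^{Y}_{n}(\theta)\,(\Gamma^{Y}_{n}(\theta))^{-1}\partial_{j}\Gamma^{Y}_{n}(\theta)\big)$, i.e. to minus the $n$-sample Fisher information. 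Because $\Gamma^{Y}_{n}(\theta)$ is the Toeplitz matrix of the spectral density $g_{H,\underline{\phi}}$, and $g_{H,\underline{\phi}}$ satisfies $Condition.1$ and $Condition.2$ (checked in the auxiliary results), I would invoke the Toeplitz trace asymptotics of \cite{cohen:hal-00638121} — the analogue here of the estimates of \cite{fox1986large} already used in Section 2 — to obtain $\frac{1}{n}\mathbb{E}[\Delta l_{n}(\theta)] \to -\mathcal{I}(\theta)$ and, more precisely, $\mathbb{E}[\Delta l_{n}(\theta)] + n\mathcal{I}(\theta) = o(\sqrt{n})$, with $\mathcal{I}(\theta)$ the matrix in (\ref{3.3}); this is consistent with, and a by-product of, the local asymptotic normality established in Section 3.

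Then I would control the fluctuation $Q_{n}(\theta) - \mathbb{E}[Q_{n}(\theta)]$, which carries all the randomness of $\Delta l_{n}(\theta)$. For a centered Gaussian vector with covariance $\Gamma^{Y}_{n}(\theta)$ one has $\mathrm{Var}\big(\big(\underline{Y}^{(n)}\big)^{*}M_{n}(\theta)\underline{Y}^{(n)}\big) = 2\,\mathrm{Tr}\big((M_{n}(\theta)\Gamma^{Y}_{n}(\theta))^{2}\big)$; using the regularity of $g_{H,\underline{\phi}}$ together with the bounds on the entries of $(\Gamma^{Y}_{n}(\theta))^{-1}$ and of the derivative Toeplitz matrices (the counterpart of the estimates already used for $\Gamma_{nT}^{-1}$ in Section 2, cf. \cite{fox1986large,esstafa2019long}), this trace is $O(n)$, hence $\mathrm{Var}(\Delta l_{n}(\theta)) = O(n)$. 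Chebyshev's inequality then gives $\Delta l_{n}(\theta) - \mathbb{E}[\Delta l_{n}(\theta)] = O_{\mathbb{P}}(\sqrt{n})$, and combining with the mean estimate yields $\Delta l_{n}(\theta) + n\mathcal{I}(\theta) = O_{\mathbb{P}}(\sqrt{n})$. Dividing by $\sqrt{n}$ gives $\frac{\Delta l_{n}(\theta)}{\sqrt{n}} + \sqrt{n}\,\mathcal{I}(\theta) = O_{\mathbb{P}}(1)$, as claimed.

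The main obstacle is the second and third steps: the trace asymptotics $\frac{1}{n}\mathbb{E}[\Delta l_{n}(\theta)]\to-\mathcal{I}(\theta)$ with an $o(\sqrt{n})$ remainder and the variance bound $\mathrm{Var}(\Delta l_{n}(\theta)) = O(n)$, both of which require taming the singularity of $g_{H,\underline{\phi}}$ at $\lambda = 0$ produced by the long memory of the fGn. These are exactly the technical estimates of \cite{cohen:hal-00638121} together with the quadratic-form bounds of Fox--Taqqu type in \cite{fox1986large}, so the only genuinely new input is to verify that the explicit spectral density of the additive series $Y_{n}$ obtained in Proposition 2.3 meets $Condition.1$ and $Condition.2$ — a point deferred to the auxiliary results — after which those estimates apply essentially verbatim.
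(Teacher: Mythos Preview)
Your proposal is correct and follows essentially the same route as the paper: decompose $\frac{\Delta l_{n}(\theta)}{\sqrt{n}}+\sqrt{n}\,\mathcal{I}(\theta)$ into a bias part $\mathbb{E}\big[\frac{\Delta l_{n}(\theta)}{n}\big]+\mathcal{I}(\theta)$ and a fluctuation part, show the former vanishes fast enough and the latter has $\mathrm{Var}\big(\frac{\Delta l_{n}(\theta)}{\sqrt{n}}\big)=O(1)$. The only notable difference is in the source for the rate on the bias: the paper does not extract the $o(\sqrt{n})$ remainder from \cite{cohen:hal-00638121} alone but invokes Lemmas~3--4 of \cite{lieberman2012asymptotic} to get the sharper bound $\mathbb{E}\big[\frac{\Delta l_{n}(\theta)}{n}\big]+\mathcal{I}(\theta)=O(n^{-1+\delta})$, whereas you attribute this step to Cohen et~al./Fox--Taqqu type trace asymptotics; the variance bound is, in both cases, taken from \cite{cohen:hal-00638121}.
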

\begin{proof}
The Lemma 3.6 \citep{cohen:hal-00638121} implies that, from the distribution of $\theta$, we have
 \begin{equation}
   E\left(\frac{\Delta l_{n}(\theta)}{n}\right)\rightarrow -\mathcal{I}(\theta).   
  \end{equation}
To determine the convergence rate of the above expression, Lemma 3 and Lemma 4 \citep{lieberman2012asymptotic} yield the following conclusion
  \begin{equation}
     E\left(\frac{\Delta l_{n}(\theta)}{n}\right)+\mathcal{I}(\theta) = O(n^{-1+\delta}),
  \end{equation}
where $\delta$ is a positive real number. Therefore,
  \begin{equation}
    E\left(\frac{\Delta l_{n}(\theta)}{\sqrt{n}}\right)+\sqrt{n}\mathcal{I}(\theta) = O(n^{-\frac{1}{2}+\delta}).
 \end{equation}
Furthermore, by utilizing Lemma 3.6 \citep{cohen:hal-00638121} once again, we obtain
\begin{equation}
    Var\left(\frac{\Delta l_{n}(\theta)}{\sqrt{n}}\right)=O(1).
\end{equation}
   Thus, the proof is concluded.
\end{proof}
     \begin{lemma}
     Let $\left\{ {\overline{\theta}}_{n}\right\}_{n}$ be a stochastic sequence satisfying ${\overline{\theta}}_{n} - \theta = o_{\mathbb{P}}(1)$. Then, according to the distribution of parameter $\theta$, for any $k>0$, it holds that
\begin{equation}
 \frac{\Delta l_{n}({\overline{\theta}}_{n})}{n} - \frac{\Delta l_{n}(\theta)}{n} = O_{\mathbb{P}}(n^{k}({\overline{\theta}}_{n}-\theta)).
\end{equation}
     \end{lemma}
      \begin{proof}
      Let $C_{k,\theta}$ be a compact convex set depending on $k$ and $\theta$, and ${\overline{\theta}}_{n} \in C_{k,\theta}$. According to the proof of Lemma 3.7 \citep{cohen:hal-00638121}, we have
 \begin{equation}
      \sup_{{\overline{\theta}}_{n} \in C_{k,\theta}}\left|\frac{\partial^{3}}{\partial^{i_{1}}{\theta_{1}}\partial^{i_{2}}{\theta_{2}}...\partial^{i_{d}}{\theta_{d}}}\frac{l_{n}({{\overline{\theta}}_{n}})}{n^{1+k}}\right| = O_{\mathbb{P}}(1),
\end{equation}
where $(i_{1},i_{2},...,i_{d}) \in \left\{0,1,2,3\right\}^{d}$, satisfying $i_{1}+i_{2}...+i_{d} =3$. 
In conclusion, for a finite positive random variable $K$, we have
\begin{equation}
     P\left(\left\lVert \frac{\Delta l_{n}({\overline{\theta}}_{n})}{n} - \frac{\Delta l_{n}(\theta)}{n}\right\rVert \leq Kn^{k}(||{\overline{\theta}}_{n}-\theta||\right) \geq P({\overline{\theta}}_{n} \in C_{k,\theta}),
\end{equation}
which implies $\frac{\Delta l_{n}({\overline{\theta}}_{n})}{n} - \frac{\Delta l_{n}(\theta)}{n} = O_{\mathbb{P}}(n^{k}({\overline{\theta}}_{n}-\theta))$ holds.
  \end{proof}
      According to the hypothesis of this theorem, we can deduce
	 \begin{equation}
	 	\sqrt{n}(\tilde \theta_{n}-\theta ) = \sqrt{n}(\hat \theta_{n}-\theta)+\mathcal{I}^{-1}(\hat \theta_{n})\frac{\nabla l_{n}(\hat \theta_{n})}{\sqrt{n}},  \label{6.18}
	 \end{equation}
applying mean-value theorem to $\nabla l_{n}(\theta)$, we have
	 \begin{equation}
	 	\nabla l_{n}(\hat \theta_{n}) = \nabla l_{n}(\theta)+(\hat \theta_{n}-\theta)\int^{1}_{0}\bigtriangleup l_{n}(\theta+v(\hat \theta_{n}-\theta))dv,\label{6.19}
	 \end{equation}
	   Substituting equation ($\ref{6.19}$) to equation ($\ref{6.18}$), we produce 
	 \begin{equation}
	 	\sqrt{n}(\tilde \theta_{n}-\theta) = \sqrt{n}(\hat \theta_{n}-\theta)\mathcal{I}^{-1}(\hat \theta_{n})(\mathcal{I}(\hat \theta_{n})+\frac{\int^{1}_{0}\bigtriangleup l_{n}(\theta(v))dv}{n})+\mathcal{I}^{-1}(\hat \theta_{n})\frac{\nabla l_{n}(\theta)}{\sqrt{n}}, \label{6.20}
	 \end{equation}
	   where $\theta(v)$ = $\theta + v(\hat \theta_{n}-\theta)$, $v < 1$.
	 Next, we will discuss  the consistency and asymptotic normality of one-step estimator.\\
	  (1) Consistency of $\tilde{\theta}_{n}$
  \par Observing equation ($\ref{6.20}$) ,  the first and second terms on the right-hand side can be expressed as  
     \begin{eqnarray}
         A_{n} 
         &=& \sqrt{n}(\hat \theta_{n}-\theta)\mathcal{I}^{-1}(\hat \theta_{n})(\mathcal{I}(\hat \theta_{n})+\frac{\int^{1}_{0}\bigtriangleup l_{n}(\theta(v))dv}{n})     \\ 
         &=& \sqrt{n^{\delta}}(\hat \theta_{n}-\theta)\mathcal{I}^{-1}(\hat \theta_{n})\sqrt{n^{1-\delta}}(\mathcal{I}(\hat \theta_{n})+\frac{\int^{1}_{0}\bigtriangleup l_{n}(\theta(v))dv}{n}) \nonumber
     \end{eqnarray}
       and
    \begin{equation}
      B_{n}=\mathcal{I}^{-1}(\hat \theta_{n})\frac{\nabla l_{n}(\theta)}{\sqrt{n}}=\mathcal{I}^{-1}(\theta)\frac{\nabla l_{n}(\theta)}{\sqrt{n}}+(\mathcal{I}^{-1}(\hat \theta_{n})-\mathcal{I}^{-1}(\theta))\frac{\nabla l_{n}(\theta)}{\sqrt{n}}.
    \end{equation}
      Fristly, we analyze the properties of $A_{n}$ and derive the following equation
   \begin{equation} \label{t5}
     \begin{aligned}
    \mathcal{I}(\hat \theta_{n})+\frac{\int^{1}_{0}\Delta l_{n}(\theta(v))dv}{n}
    &= (\mathcal{I}(\hat \theta_{n})-\mathcal{I}(\theta_{n})) \\
    &\quad +(\mathcal{I}(\theta_{n})+\frac{\Delta l_{n}(\theta)}{n})+\frac{1}{\sqrt{n}}\int^{1}_{0}(\frac{\Delta l_{n}(\theta(v)))}{\sqrt{n}}-\frac{\Delta l_{n}(\theta)}{\sqrt{n}})dv,
   \end{aligned}
   \end{equation}
     based on equation ($\ref{t5}$) and lemmas 6.2, 6.3, and 6.4. The convergence order of $\frac{A_{n}}{\sqrt{n}}$ is
   \begin{equation}   
      \frac{A_{n}}{\sqrt{n}} = n^{-\frac{\delta}{2}}(O_{\mathbb{P}}(n^{-\frac{\delta}{2}})+O_{\mathbb{P}}(n^{-\frac{1}{2}})+O_{\mathbb{P}}(n^{k-\frac{\delta}{2}})),
 \end{equation}
        when $k-\delta < 0$, we have $\frac{A_{n}}{\sqrt{n}}\xrightarrow[n\rightarrow\infty]{\mathbb{P}}0$.\\
Secondly, we consider the property of $B_{n}$ and it has the form of
   \begin{equation}
   B_{n}=\mathcal{I}^{-1}(\hat \theta_{n})\frac{\nabla l_{n}(\theta)}{\sqrt{n}}=\mathcal{I}^{-1}(\theta)\frac{\nabla l_{n}(\theta)}{\sqrt{n}}+(\mathcal{I}^{-1}(\hat \theta_{n})-\mathcal{I}^{-1}(\theta))\frac{\nabla l_{n}(\theta)}{\sqrt{n}}, 
   \end{equation}
   according to \cite{hariz2024fast} and theorem 1 in \cite{lieberman2012asymptotic}, we have
   \begin{equation}
    \frac{\nabla  l_{n}({\theta})}{\sqrt{n}} \xrightarrow[n\rightarrow\infty]{\mathbb{P}} 0.
   \end{equation}
   When $\mathcal{I}n(\cdot)$ is a non-degenerate continuous function, as indicated by the above equation, it can be observed that both the first and second terms of $B_{n}$ tend to 0. Consequently, $\frac{B_{n}}{\sqrt{n}}$ converges in probability to 0, and naturally, it also converges in distribution to 0.
   \par Combining the above results, we can conclude the consistency of $\tilde \theta_{n}$. \\
(2) Asymptotic normality of $\tilde{\theta}_{n}$ 
    \par According to the results of \cite{hariz2024fast}, the equation
	\begin{equation}
        \mathcal{I}^{-1}(\theta)\frac{\nabla  l_{n}(\theta)}{\sqrt{n}}+(\mathcal{I}^{-1}(\hat \theta_{n})-\mathcal{I}^{-1}(\theta))\frac{\nabla  l_{n}(\theta)}{\sqrt{n}} \label{6.60}
    \end{equation}
    converges in probability to a bounded limit as $n \rightarrow \infty$. Simultaneously, the second term on the right-hand side of equation ($\ref{6.60}$) converges to 0. By applying the Slutsky theorem, we can verify the asymptotic normality of $\tilde \theta_{n}$.

  \section{Auxiliary results}
     \begin{lemma}
      Under the hypothesis on the parametric space have the following results\\
   (1) For any $H \in [0,1]$ and $j \in \left\{0,1,2,3 \right\}$, $\frac{\partial}{\partial \lambda } \frac{\partial^{j}}{\partial^{j}H}g_{H,\underline {\phi}}(\lambda)$.\\
   (2) For any $j \in \left\{0,1,2,3 \right\} $ the functions $\frac{\partial^{j}}{\partial^{j}H}g_{H,\underline {\phi}}(\lambda)$ are symmetric with respect to  $\lambda$.\\
   (3) For any $\delta > 0 $ and all $(H,\lambda) \in [0,1] \times  [-\pi,\pi]  \backslash\left\{0 \right\}$
    \par a.$	C_{1,\delta} |\lambda|^{1-2H+\delta }\leq g_{H,\underline {\phi}}(\lambda) \leq C_{2,\delta}|\lambda|^{1-2H-\delta }.$ 
    \par b.$|\frac{\partial}{\partial \lambda}g_{H,\underline {\phi}}(\lambda)| \leq C_{3,\delta}|\lambda|^{-2H-\delta}.$
    \par c.For any $j \in \left\{0,1,2,3 \right\} $, $|\frac{\partial^{j}}{\partial^{j} H}g_{H,\underline {\phi}}(\lambda)| \leq C_{4,\delta}|\lambda|^{-2H-\delta},$\\
 where $C_{i,\delta}$ are constants for $i = 1,2,3,4.$
     \end{lemma}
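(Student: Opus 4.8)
The plan is to exploit the product structure $g_{H,\underline{\phi}}(\lambda)=P_{\underline{\phi}}(\lambda)\,f_{\epsilon^{H}_{n}}(\lambda)$, where $P_{\underline{\phi}}(\lambda)=\big|\sum_{p=0}^{T-1}\Phi_{\phi(T-p)}(\lambda)\big|^{2}$ carries all the dependence on $\underline{\phi}$ and none on $H$, and to reduce every assertion to elementary estimates on the two explicit factors. First I would observe that for $\underline{\phi}$ ranging over a compact subset of the parameter space each denominator $1-\phi(T-p)e^{-i\lambda T}$ stays bounded away from zero — its modulus is at least $l/(1+l)$ under $(A_{0})$ — so $P_{\underline{\phi}}$ is a $C^{\infty}$ function of $(\lambda,\underline{\phi})$ on $[-\pi,\pi]\times\prod_{u}\Theta^{l^{\star}}_{u}$; consequently $P_{\underline{\phi}}$ and all of its $\lambda$-derivatives are uniformly bounded there, and $P_{\underline{\phi}}$ is bounded below by a positive constant on any parameter region where $\sum_{p}\Phi_{\phi(T-p)}(\cdot)$ has no zero — in particular on a neighbourhood of $\lambda=0$, since $P_{\underline{\phi}}(0)=\big|\sum_{p}(1-\phi(T-p))^{-1}\big|^{2}>0$. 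This immediately settles the existence and joint continuity of the mixed derivatives in (1), and the symmetry statement (2): indeed $\Phi_{\phi(T-p)}(-\lambda)=\overline{\Phi_{\phi(T-p)}(\lambda)}$ gives $P_{\underline{\phi}}(-\lambda)=P_{\underline{\phi}}(\lambda)$, while $f_{\epsilon^{H}_{n}}$ is even in $\lambda$ after the reindexing $j\mapsto-j$ in the series \eqref{2.3}, so $g_{H,\underline{\phi}}$ and each $\partial_{H}^{j}g_{H,\underline{\phi}}$ are even.

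The core step is to isolate the singular part of $f_{\epsilon^{H}_{n}}$. Using $1-\cos\lambda=\frac{1}{2}|1-e^{i\lambda}|^{2}$ and peeling off the $j=0$ term of \eqref{2.3}, I would write
\[
f_{\epsilon^{H}_{n}}(\lambda)=\frac{C_{H}}{2}\,(2-2\cos\lambda)\,\Big(|\lambda|^{-(2H+1)}+R_{H}(\lambda)\Big),\qquad R_{H}(\lambda)=\sum_{j\neq0}\frac{1}{|\lambda+2j\pi|^{2H+1}},
\]
where, by the Weierstrass $M$-test applied to the tail and to its termwise $\lambda$- and $H$-derivatives, $R_{H}$ is jointly $C^{\infty}$ on $[-\pi,\pi]\times[0,1]$ with all derivatives uniformly bounded. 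All the required bounds then reduce to three elementary facts: (i) $2-2\cos\lambda$ vanishes to exact order two at the origin, with $\frac{4}{\pi^{2}}\lambda^{2}\le 2-2\cos\lambda\le\lambda^{2}$ on $[-\pi,\pi]$ and bounded $\lambda$-derivatives; (ii) $\partial_{\lambda}|\lambda|^{-(2H+1)}=-(2H+1)\,\mathrm{sgn}(\lambda)\,|\lambda|^{-(2H+2)}$ and $\partial_{H}^{j}|\lambda|^{-(2H+1)}=(-2\log|\lambda|)^{j}\,|\lambda|^{-(2H+1)}$; and (iii) $C_{H}=\frac{1}{2\pi}\Gamma(2H+1)\sin(\pi H)$ is, together with its $H$-derivatives, bounded on the compact $H$-range.

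Assembling (i)--(iii) through Leibniz' rule gives (3). For (3a), $f_{\epsilon^{H}_{n}}(\lambda)$ is comparable to $\lambda^{2}\cdot|\lambda|^{-(2H+1)}=|\lambda|^{1-2H}$ up to the bounded positive factors $C_{H}/2$ and $1+|\lambda|^{2H+1}R_{H}(\lambda)$, and multiplying by $P_{\underline{\phi}}(\lambda)$ (bounded above, and bounded below on the region considered) produces constants with $C_{1}|\lambda|^{1-2H}\le g_{H,\underline{\phi}}(\lambda)\le C_{2}|\lambda|^{1-2H}$, which is stronger than the stated $\delta$-version. For (3b), $\partial_{\lambda}g_{H,\underline{\phi}}=(\partial_{\lambda}P_{\underline{\phi}})f_{\epsilon^{H}_{n}}+P_{\underline{\phi}}(\partial_{\lambda}f_{\epsilon^{H}_{n}})$; the first term is $O(|\lambda|^{1-2H})$, and in the second $\partial_{\lambda}f_{\epsilon^{H}_{n}}$ produces only terms of type $\sin\lambda\cdot|\lambda|^{-(2H+1)}$ or $(2-2\cos\lambda)\cdot|\lambda|^{-(2H+2)}$, both $O(|\lambda|^{-2H})$, so $|\partial_{\lambda}g_{H,\underline{\phi}}(\lambda)|\le C|\lambda|^{-2H}\le C|\lambda|^{-2H-\delta}$ on $[-\pi,\pi]$. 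For (3c), $\partial_{H}^{j}g_{H,\underline{\phi}}=P_{\underline{\phi}}\,\partial_{H}^{j}f_{\epsilon^{H}_{n}}$, and $\partial_{H}^{j}f_{\epsilon^{H}_{n}}$ is a finite sum of terms $(\text{bounded }H\text{-derivative of }C_{H})\times(2-2\cos\lambda)\times(\log|\lambda|)^{i}|\lambda|^{-(2H+1)}$ together with $R_{H}$-contributions, hence $O\big(|\lambda|^{1-2H}\,|\log|\lambda||^{j}\big)$; since $|\lambda|^{1+\delta}|\log|\lambda||^{j}$ is bounded on $(0,\pi]$ for every $\delta>0$, this is $\le C_{4,\delta}|\lambda|^{-2H-\delta}$.

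The one genuinely delicate point is the lower bound in (3a). Near $\lambda=0$ it is harmless, being governed by the strictly positive $P_{\underline{\phi}}(0)$ and by $|\lambda|^{1-2H}$; but away from the origin the transfer factor $\sum_{p}\Phi_{\phi(T-p)}(\lambda)$ can vanish at an isolated interior frequency for degenerate configurations of $\underline{\phi}$ — for instance at $\lambda=\pi$ when $\phi(1)=\phi(2)$ in the case $T=2$, where the numerator of $g$ collapses to $(\phi(1)-\phi(2))^{2}$. I would therefore state (3a) over compact parameter sets avoiding this lower-dimensional zero locus, which is all that the GPH and Whittle arguments need since only the behaviour of $g_{H,\underline{\phi}}$ as $\lambda\to0$ feeds into the semiparametric estimators. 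Apart from this caveat, the proof is a routine combination of the $M$-test and Leibniz' rule, using nothing beyond the estimates already established for $f_{\epsilon^{H}_{n}}$ in \cite{cohen:hal-00638121}.
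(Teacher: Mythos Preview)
Your approach is essentially the same as the paper's: exploit the factorisation $g_{H,\underline{\phi}}=P_{\underline{\phi}}\cdot f_{\epsilon^{H}_{n}}$, bound $P_{\underline{\phi}}$ above and below, and inherit the power-law estimates from the known behaviour of the fGn spectral density. The paper's proof is much terser---it simply invokes Lemma~5.4 of \cite{hariz2024fast} for the $f_{\epsilon^{H}_{n}}$ factor where you carry out the $j=0$ peel-off and the $M$-test explicitly---but the skeleton is identical.

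Where you and the paper genuinely diverge is on the lower bound in (3a). The paper asserts a two-sided estimate
\[
\frac{\sqrt{2}\,T}{1-\phi_{\max}^{2}(u)}\;\le\;\Big|\sum_{p=0}^{T-1}\Phi_{\phi(T-p)}(\lambda)\Big|^{2}\;\le\;\frac{\sqrt{2}\,T}{1-\phi_{\min}^{2}(u)}
\]
uniformly in $\lambda$, which is not correct: as you spotted, the transfer factor can vanish at an interior frequency (your example $T=2$, $\lambda=\pi$, $\phi(1)=\phi(2)$ is confirmed by the explicit numerator in Section~4, which collapses to $(\phi(1)-\phi(2))^{2}$ there). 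Your resolution---restrict to compact parameter sets avoiding this lower-dimensional zero locus, and note that only the $\lambda\to0$ behaviour feeds into the GPH/Whittle machinery---is the right repair, and it is one the paper does not make. So your argument is both more detailed and, on this point, more careful than the paper's own.
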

    \begin{proof}
     We start from Assertion 3a, which states that
     \begin{equation}
         g_{H,\underline {\phi}}(\lambda)=  C_{H}|\sum^{T-1}_{p=0}\frac{e^{-ip\lambda}}{1-\phi(T)e^{-i\lambda T}} |^{2}(1-cos(\lambda))\sum\limits_{j \in Z}\frac{1}{|\lambda+2j\pi|^{2H+1}},
    \end{equation}
     where $C_H=\frac{1}{2\pi}\Gamma(2H+1)sin(\pi H)$ and $\Gamma(\cdot)$ denote the Gamma function. 
     According to Lemma 5.4 in \cite{hariz2024fast}, we have
     \begin{equation}
     K_{1,\delta}|\lambda|^{1-2H+\delta } \leq C_{H}(1-cos(\lambda))\sum\limits_{j \in Z}\frac{1}{|\lambda+2j\pi|^{2H+1}} \leq K_{2,\delta}|\lambda|^{1-2H-\delta }  
     \end{equation}
     and 
     \begin{equation}
        \frac{\sqrt{2}T}{1-\phi^{2}_{max}(u)}\leq |\sum^{T-1}_{p=0}\frac{e^{-ip\lambda}}{1-\phi(T)e^{-i\lambda T}} |^{2} \leq \frac{\sqrt{2}T}{1-\phi^{2}_{min}(u)},
     \end{equation}
     where $\phi^{2}_{max}(u)$ and $\phi^{2}_{min}(u)$ are the  maximum and minimum values of $\phi(u)$, respectively. Thus, Assertion 3a has been proved and assertion 3b follows straightforwardly from Assertion 3a. 
\par Next, we discuss Assertion 3c, which can be obtained directly from  Lemma 5.4 in \cite{hariz2024fast}. The partial derivative of  $g_{H,\underline {\phi}}(\lambda)$ does not depend on $|\sum^{T-1}_{p=0}\frac{e^{-ip\lambda}}{1-\phi(T)e^{-i\lambda T}} |^{2}$, and the modulus is bounded.\\    
 \end{proof}

\end{document}